\documentclass{siamart251216}
\usepackage{braket}
\usepackage{amsmath,amssymb,amsfonts}
\usepackage{algorithmic}
\usepackage{graphicx}
\usepackage{textcomp}

\usepackage{color} 
 \usepackage{amsfonts}
 \usepackage{amsmath}
 \usepackage{amssymb}
 \usepackage{array}
 \usepackage{mathrsfs}
 \usepackage{leftidx}
 \usepackage{graphicx}
 \usepackage{epstopdf}
 \usepackage{dcolumn}
 \usepackage{bm}

\usepackage{cite}

\usepackage{multirow}

\usepackage{cases}

\usepackage{lipsum}
\usepackage{amsfonts}
\usepackage{graphicx}
\usepackage{epstopdf}
\usepackage{algorithmic}
\usepackage{lineno}
\ifpdf
  \DeclareGraphicsExtensions{.eps,.pdf,.png,.jpg}
\else
  \DeclareGraphicsExtensions{.eps}
\fi

\newsiamremark{remark}{Remark}
\newsiamremark{hypothesis}{Hypothesis}
\crefname{hypothesis}{Hypothesis}{Hypotheses}
\newsiamthm{claim}{Claim}

\newtheorem{assumption}{Assumption}

\def\BibTeX{{\rm B\kern-.05em{\sc i\kern-.025em b}\kern-.08em
    T\kern-.1667em\lower.7ex\hbox{E}\kern-.125emX}}

\title{A Lie-algebraic approach to non-Markovian quantum dynamics}

\author{Haijin Ding\thanks{Department of Mathematics, The University of Hong Kong, Pokfulam
Road, Hong Kong, Hong Kong SAR, P.R. China. (\email{dhj17@tsinghua.org.cn}).} \and Stephen S.-T. Yau\thanks{Corresponding author. Department of Mathematical Sciences, Tsinghua
University, Beijing 100084, China (\email{yau@uic.edu}).} \and Zhiwen Zhang\thanks{Corresponding author. Department of Mathematics, The University of Hong Kong, Pokfulam
Road, Hong Kong, Hong Kong SAR, P.R. China, and Materials Innovation Institute for Life Sciences and Energy (MILES),
HKU-SIRI, Shenzhen, 518045, P.R. China (\email{zhangzw@hku.hk}).}}

\headers{A Lie algebraic approach to non-Markovian quantum dynamics}{Haijin Ding,  Stephen S.-T. Yau and Zhiwen Zhang}

\begin{document}

\maketitle

\begin{abstract}
In this paper, we study the non-Markovian quantum dynamics in quantum computations from the perspective of a Lie algebraic approach based on numerical analysis. By vectorizing the density matrix of quantum states, the non-Markovian evolutions can be represented with high-dimensional linear time-varying equations, where the time-varying parameters arise from the non-Markovian interactions between the quantum system and environment. We study the Magnus expansion of such linear time-varying quantum dynamics and clarify how the truncation errors for the first- and second-order Magnus expansions are influenced by the non-Markovian properties. Besides,  when the quantum states are measured for filtering, the dynamics can be modeled as time-varying stochastic differential equations due to the existence of measurement noise. The Magnus expansions based on quantum stochastic filtering are different when the quantum measurement noises are modeled in an {It\^{o}} or Stratonovich approach, rendering different truncation errors. Based on this, numerical simulations further demonstrate the efficiency of Magnus expansions in simulating non-Markovian quantum dynamics without or with stochasticity, and how the truncation errors are influenced by the Lie algebras in the Liouville space. 
\end{abstract}

\begin{keywords}
Non-Markovian quantum dynamics, quantum stochastic filtering, Lie algebra, Magnus expansion.
\end{keywords}

\begin{MSCcodes}
62M09, 	81Q93, 17B45, 93E11, 81P68    
\end{MSCcodes}

\section{Introduction}
Quantum computation has attracted much attention for its potential advantages in solving mathematical problems that are difficult for classical computation methods~\cite{nielsen2010quantum,divincenzo1995quantum,simon1997power}, i.e., solving problems that are $\rm NP$ hard~\cite{kaminsky2004scalable}, simulating complex chemical or physical systems~\cite{benioff1980computer,feynman2018simulating,schleich2026chemically} and accelerating algorithms in artificial intelligence~\cite{cong2019quantum,zhu2023artificial}. Currently, quantum computation realizations are based on various platforms such as superconducting circuits~\cite{arute2019quantum,wu2021strong}, cold atoms~\cite{ma2023high}, molecules~\cite{bao2023dipolar}, photons~\cite{zhong2020quantum}, and so on. Among different approaches, the core implementations of quantum computational algorithms are based on the control and measurement of quantum systems. 

The evolution of quantum states in a closed system independent of the environment is governed by the Schr\"{o}dinger equation~\cite{boussaid2013weakly}, and the dynamics of the quantum state vector can be regarded as a bilinear system when time-varying control fields are applied to the quantum system~\cite{mirrahimi2005lyapunov,duca2024small}. Due to the unitary property of the Hamiltonian, the evolution of quantum states in the closed system can be regarded as rotations on unitary groups. The numerical analysis of quantum dynamics based on Schr\"{o}dinger equation has been studied from the perspective of Magnus expansion~\cite{SIAM2003magnus,iserles2018magnus} and Newton method~\cite{von2010globalized}. Furthermore, the numerical frameworks can be extended to simulations of various Hamiltonian and gate operations in quantum computations~\cite{casares2024quantum}.

However, practical quantum systems always interact with the environment via the decoherence process. As a result, an excited state with higher energy levels can decay to its ground state with lower energy, and pure quantum states can become mixed after interacting with the environment~\cite{breuer2002theory}. On the one hand, the evolution of quantum states in the open system can be modeled as stochastic Schr\"{o}dinger equations with different trajectories~\cite{plenio1998quantum}. On the other hand, a quantum state in the density matrix format is governed by the Lindblad master equation, the Lindbladian representing its interaction with the environment~\cite{breuer2002theory}. For the Markovian quantum master equation without memory effects, the density matrix can be vectorized to the high-dimensional column vector that evolves in the Liouville space~\cite{markowich1989equivalence}. Additionally, for the more general circumstance that quantum systems interact with the environment in a non-Markovian format~\cite{diosi1997nonPLA,diosi1998non,strunz1999open}, the amplitudes of the Lindbladian will be time-varying integrals, rendering time-varying parameters in the equivalent Liouville equation~\cite{smirne2010nakajima,ding2024non}. 

The measurement, filtering, and feedback control of quantum systems are pivotal to both quantum optics and quantum computing~\cite{wiseman2009quantum,bouten2007introduction,belavkin1992quantum,ding2026nonAuto}. For example, feedback control using the measurement information of quantum states can correct the error bits in quantum computations~\cite{sarovar2004practical}. In addition, filtering methods can identify parameters~\cite{chase2009single} and readout quantum states~\cite{readout} in the realizations of quantum algorithms. Due to the existence of measurement noise, the quantum filtering dynamics can be modeled as a stochastic master equation and can also be generalized to the high-dimensional  Liouville equation with stochasticity. Recently, numerical methods and scientific computations on stochastic differential equations have attracted much attention~\cite{kamm2021stochastic,wang2020magnus}, while this has not been generalized to the stochastic evolutions of quantum systems. 

In this paper, we study the numerical methods for general quantum dynamics with non-Markovianity from the perspective of Magnus expansion and Lie algebras. The main contributions of this research work are summarized as follows:
\begin{itemize}
\item In Theorem~\ref{FirstOrderError} and Theorem~\ref{SecondOrderError}, we rigorously clarify how the truncation errors of the first- and second-order Magnus expansions are influenced by the non-Markovian interaction between the quantum system and environment.
\item In Theorem~\ref{convergence}, we illustrate the condition on the convergence and improvements of Magnus expansions in simulating non-Markovian quantum dynamics in a rotational interaction picture. 
\item In Theorem~\ref{Itoerror}, Theorem~\ref{ItoStoMag} and Theorem~\ref{StrotovichStoMag}, we compare the difference between Magnus expansions based on {\rm{It\^{o}}} and Stratonovich stochastic dynamics in quantum filtering.
\end{itemize}

The remainder of this paper is organized as follows. Section~\ref{Sec:Model} introduces the modeling of non-Markovian quantum dynamics in the Liouville space and the Magnus expansion for the circumstance without measurements. Section~\ref{Sec:Stochastic} studies the Magnus expansion for non-Markovian quantum stochastic dynamics when the quantum system is measured, especially on the difference between the Magnus expansions for {\rm{It\^{o}}} and Stratonovich stochastic dynamics. Numerical simulations for the above non-Markovian quantum dynamics are presented in~\ref{Sec:numerical}. Section~\ref{Sec:Conclusion} concludes this paper.  
\section{Non-Markovian quantum stochastic dynamics} \label{Sec:Model}
In this section, we first introduce the non-Markoivan quantum stochastic dynamics with measurements, which can be modeled as a stochastic master equation (SME) and an equivalent vectorized format. Based on this, we study the non-Markovian dynamics from the perspective of Lie algebras.  
\subsection{Modeling with stochastic master equations}
Quantum systems interacting with a non-Markovian environment can be modeled as the following stochastic Schr\"{o}dinger equation~\cite{diosi1998non,diosi1997nonPLA}
\begin{equation} \label{con:SSENonMar}
\begin{aligned}
\frac{d}{dt} |\psi(t)\rangle =& -iH |\psi(t)\rangle  + \left[L |\psi(t)\rangle z_t -L^{\dag}\int_0^t \alpha(t,s)\frac{\delta |\psi(t)\rangle}{\delta z_s}ds\right],
\end{aligned}
\end{equation}
where $|\psi(t)\rangle$ represents the state vector of a quantum system with Hamiltonian $H$, $L$ represents the interaction operator between the quantum system and the environment, $L^{\dag}$ is the  Hermitian conjugate of the operator $L$, the time-varying stochastic variable $z_t$ represents the influence of the environment on the quantum system, the integral kernel $\alpha(t,s)$ represents the non-Markovian property of the system-environment interaction and
\begin{equation} \label{con:NonMarkovAlpha}
\alpha(t,s) =E\left[z_tz_s^*\right] =  \frac{\gamma}{2} e^{-\gamma|t-s|-i\Omega(t-s)},
\end{equation}
where $\gamma^{-1}$ represents the environmental memory time scale, $\Omega$ represents the central frequency for the modeled oscillators, $E[\bullet]$ represents the average of a stochastic process, and $*$ is for the complex conjugate. Obviously, 
\begin{equation} \label{con:MarkovAlpha}
\lim_{\gamma \rightarrow \infty} \alpha(t,s)  = \delta(t-s),
\end{equation}
and the  Schr\"{o}dinger equation (\ref{con:SSENonMar}) reduces to be Markovian when $\gamma \rightarrow \infty$. 

Apart from the state vector dynamics in Eq.~(\ref{con:SSENonMar}), the quantum system can also be represented with the density matrix $\rho(t) = |\psi(t)\rangle \langle \psi(t)|$, and the non-Markovian dynamics in Eq.~(\ref{con:SSENonMar}) can be represented by a non-Markovian master equation after averaging $z_t$. Then the quantum dynamics can be modeled as 
\begin{equation} \label{con:MasterEq}
\begin{aligned}
\frac{d\rho}{dt}   =&-i \left[H,\rho\right]  + \left[L,\rho O^{\dag} \right] + \left[O\rho, L^{\dag}\right] ,
\end{aligned}
\end{equation}
where the commutator $[A,B] = AB-BA$, the operator 
\begin{equation}
\begin{aligned}
O =  \sum_{j =1}^N f_j(t) L_j  ,
\end{aligned}
\end{equation}
and
\begin{equation}\label{con:blackgdefine}
\begin{aligned}
f_j(t) = \int_0^t \beta_j(t,s)\alpha(t,s) ds, 
\end{aligned}
\end{equation}
where $L$ represents the dissipation operator of the quantum system to the environment, $L_j$ represents the non-Markoivan interaction between the quantum and the environment with memory effects, and $f_j(t)$ is the time-varying Lindblad amplitude determined by the integral kernel $\alpha(t,s)$ and the implicit function $\beta_j(t,s)$ ~\cite{diosi1998non,ding2024non}. The parameter settings are constrained by the following assumption.

\begin{assumption}
The time-varying amplitudes $f_j(t)$ are bounded.
\end{assumption} 

In addition, when the quantum system is coupled to a measurement apparatus via the operator $M$, we further define the Hermitian operator $\mathcal{M} = M+ M^{\dag}$ for the acquisition of real-valued measurements; then the measurement result of the quantum system via Homodyne detection reads
\begin{equation} \label{con:Ic}
\begin{aligned}
\mathcal{Y}(t)   = \left\langle \mathcal{M} \right\rangle  + \frac{d W_t}{d t},
\end{aligned}
\end{equation}
where the derivative  $\xi\left(t\right) = dW_t/dt $ is regarded as Gaussian white noise,  $d W_t$ is a Wiener increment that satisfies $E\left[d W_t\right] = 0$ and $E\left[d W_t^2\right] = dt$. Due to the existence of measurement noise, the dynamics of quantum states with continuous detection is governed by the {\rm{It\^{o}}} stochastic master equation (SME) as~\cite{wiseman1994quantum,ticozzi2012stabilization,dingSIAM} 
\begin{equation} \label{con:SME}
\begin{aligned}
d\rho   =&-i \left[H,\rho\right]dt  + \mathcal{L}_{O}\left[\rho\right]dt + \mathcal{H}[M]\rho d W_t,
\end{aligned}
\end{equation}
where $\mathcal{L}_{O}\left[\rho\right] = \left[L,\rho O^{\dag} \right] + \left[O\rho, L^{\dag}\right]$ according to Eq.~(\ref{con:MasterEq}) and the last term induced by quantum measurements reads
\begin{equation} \label{con:HM}
\begin{aligned}
\mathcal{H}[M]\rho = M\rho + \rho M^{\dag} - {\rm Tr}\left[\left(M+M^{\dag}\right)\rho \right]\rho.
\end{aligned}
\end{equation}

\begin{remark}
The stochastic master equation (\ref{con:SME}) is a combination of linear and nonlinear terms. The linear part is determined by the quantum system's free Hamiltonian and non-Markovian interactions with the environment. The nonlinear part is determined by the measurement operator and the real-time density matrix. 
\end{remark}

\subsection{Vectorization of the density matrix and representation of non-Markovian dynamics}
To analyze the above stochastic dynamics, we rewrite $\rho_c$ as a vector $\vec{\rho} = {\rm vec}\left[\rho\right]$ with ${\rm vec}[\bullet]$ representing the vectorization of a matrix. We assume that the dimension of the density matrix $\rho$ is $n \times n$, and that of $\vec{\rho}$ is $n^2\times 1$.  According to the mathematical relationship that ${\rm vec}[ABC] = \left(C^{\rm T}\otimes A\right){\rm vec}[B]$ with ${\rm T}$ for transpose, Eq.~(\ref{con:SME}) can be rewritten as the Liouville equation in vector form, namely
\begin{equation} \label{con:SMEVec}
\begin{aligned}
d\vec{\rho}   =&\mathbb{A}(t) \vec{\rho} dt  +\mathbb{B}\left(t, \rho\right) \vec{\rho} d W_t \triangleq \mathcal{L}\left(t,dW_t\right) \vec{\rho},
\end{aligned}
\end{equation}
with
\begin{subequations}   \label{con:matrixAB}
\begin{numcases}{}
\mathbb{A}(t) =  i \left( H^{\rm T} \otimes \mathbb{I}_n   - \mathbb{I}_n\otimes H\right) + O^{*} \otimes L + L^{*} \otimes O \notag\\
~~~~~~~~~-L^{\rm T} O^{*} \otimes \mathbb{I}_n -\mathbb{I}_n  \otimes  L^{\dag}O  ,\\
\mathbb{B}\left(t, \rho\right) = \mathbb{I}_n \otimes M + M^{*} \otimes \mathbb{I}_n  - {\rm Tr}\left[\left(M+M^{\dag}\right)\rho \right] \mathbb{I}_{n^2} ,
\end{numcases}
\end{subequations}
$\mathbb{I}_n$ is the identity matrix with the same dimension as $\rho$, and $\mathbb{I}_{n^2}$ represents that with the dimension $n^2\times n^2$~\cite{ture2024application}.

According to \cite{scopa2019exact}, for the $n \times n$ density matrix, the Hamiltonian $H$ and the operator $L$ in Eq.~(\ref{con:matrixAB}) can be represented with a complete set of unitary operators $\left\{F_j\right\}_{j=0}^{n^2-1} $ that satisfy $F_0 = \mathbb{I}_n/n$, $F_j = F_j^{\dag}$, ${\rm Tr} \left( F_j \right) = \delta_{j0}$,  ${\rm Tr} \left( F_j^{\dag} F_l \right)= \delta_{jl}$, and
\[ 
\delta_{jl} = \begin{cases} 
1, & j = l, \\ 
0, & j\neq l. 
\end{cases}
\]
Then the Lie algebra basis in the Liouville space can be represented based on $\left\{F_j\right\}_{j=0}^{n^2-1} $ as

\begin{subequations}   \label{con:LiouvilleLieBasis}
\begin{numcases}{}
\mathcal{H}_j = -i \left(\mathbb{I}_n \otimes F_j -F_j^* \otimes \mathbb{I}_n  \right),\\
\mathcal{D}_{jl} = F_l^* \otimes F_j - \frac{1}{2} \mathbb{I}_n \otimes F_lF_j - \frac{1}{2}  F_j^*F_l^* \otimes\mathbb{I}_n,
\end{numcases}
\end{subequations}
where $j,l = 1,2,\cdots, n^2-1$. It should be noted that in Eq.~(\ref{con:LiouvilleLieBasis}), the set $\left\{F_j\right\}_{j=0}^{n^2-1} $ generates a closed $\mathfrak{su}(n)$ Lie algebra, and $\mathcal{D}_{jl}$ are non-unitary operators.

The dynamics in Eq.~(\ref{con:SMEVec}) is on a complex finite-dimensional manifold $\mathcal{M}$, and $\vec{\rho} \in \mathcal{M}$. $\mathcal{G}$ denotes a finite-dimensional Lie group; then the Lie group action is regarded as $\Phi : \mathcal{G} \times \mathcal{M} \rightarrow  \mathcal{M}$, with the vector field $V = \mathcal{L}(t,dW) \vec{\rho}$ as in Eq.~(\ref{con:SMEVec})\cite{StochasticIntegrator}. 
For the evolution of the vector $\vec{\rho}$ in the time domain, we define the flow map $\varphi_t \left[\bullet \right] : \mathcal{M} \rightarrow  \mathcal{M}$, and 
\begin{equation} \label{con:flowmap}
\begin{aligned}
\vec{\rho}(t) = \varphi_t \left[\vec{\rho}(0)\right],
\end{aligned}
\end{equation}
by integrating Eq.~(\ref{con:SMEVec}).

We denote the exponential map between the Lie group $\mathcal{G}$ and Lie algebra $\mathfrak{g}$ as
\[
\mathrm{exp} : \mathfrak{g} \rightarrow \mathcal{G}.
\]
Then the quantum stochastic dynamics in Eq.~(\ref{con:SMEVec}) can be regarded as stochastic dynamics on Lie groups with the closed Lie algebra of superoperators in Eq.~(\ref{con:LiouvilleLieBasis}).  $\mathbb{A}(t)$ and $\mathbb{B}(t)$ in Eq.~(\ref{con:matrixAB}) can be represented as time-varying combinations of the basis $\mathcal{H}_j$ and $\mathcal{D}_{jl}$ in Eq.~(\ref{con:LiouvilleLieBasis}). That is,

\begin{subequations}   \label{con:ABrepresentation}
\begin{numcases}{}
\mathbb{A}(t) = \sum_j \hat{a}_j(t) \mathcal{H}_j + \sum_{j,l} \check{a}_{jl}(t)\mathcal{D}_{jl} ,\\
\mathbb{B}\left(t, \rho\right) =  \sum_j \hat{b}_j\left(t, \rho\right) \mathcal{H}_j + \sum_{j,l} \check{b}_{jl}\left(t, \rho\right)\mathcal{D}_{jl},
\end{numcases}
\end{subequations}
where the time-invariant components in $\hat{a}_j(t)$ and $\check{a}_{jl}(t)$ are due to the representation of the quantum system's free Hamiltonian in the Liouville space, the time-varying components in $\hat{a}_j(t)$ and $\check{a}_{jl}(t)$ are determined by the non-Markovian interactions between the quantum system and environment, the parameters $\hat{b}_j\left(t, \rho\right)$ and $\check{b}_{jl}\left(t, \rho\right)$ are determined by the filtering process of quantum systems, which is related to the real-time volution of quantum states and the measurement operators.
\subsection{Simplified linear dynamics}
For the simplified case without considering the measurement process by taking $\mathbb{B}\left(t, \rho\right) \equiv 0$, Eq.~(\ref{con:SMEVec}) reduces to a linear time-varying system
\begin{equation} \label{con:SMEVecNostochastic}
\begin{aligned}
\dot{\vec{\rho}} (t)  =&\mathbb{A}(t) \vec{\rho}(t) ,
\end{aligned}
\end{equation}
where the time-varying parameters are due to the quantum system's non-Markovian interactions with the environment.  

To derive the Magnus expansion of $\vec{\rho}(t)$, we assume that the initial state is $\vec{\rho}(0)$, and the quantum state after evolution can be represented as
\begin{equation} \label{con:rhoct}
\begin{aligned}
\vec{\rho}(t) = e^{\Lambda(t)} \vec{\rho}(0),
\end{aligned}
\end{equation}
where $\Lambda(t)$ is time-varying complex-value matrix with $\Lambda(0) = \mathbf{0}$. Then
\begin{equation} \label{con:drhoct}
\begin{aligned}
\dot{\vec{\rho}}(t) &= \frac{d e^{\Lambda(t)}}{dt} \vec{\rho}(0) = {\rm dexp}_{\Lambda(t)} \left[ \dot{\Lambda}(t) \right]\vec{\rho}(t),
\end{aligned}
\end{equation}
where 
\[
e^{\Lambda(t)} = \sum_{j = 0}^{\infty} \frac{\Lambda^j(t)}{j!} ,
\]
and
\begin{equation} 
\begin{aligned}
 \frac{d e^{\Lambda(t)}}{dt}&= \sum_{j=1}^{\infty} \frac{1}{j!}\frac{d\left[\Lambda(t) \right]^j}{dt}.
\end{aligned}
\end{equation}

As solved in \cite{SIAM2003magnus}, $\Lambda(t)$ is determined by the integral $\int_0^t \mathbb{A}(\tau) d\tau$ and other higher-order commutators, namely
\begin{equation} \label{con:LambdaDeffi}
\begin{aligned}
\frac{d \Lambda}{d t} &=\sum_{j=0}^{\infty} \frac{B_j}{j!} {\rm ad}_{\Lambda}^j(\mathbb{A})  = \mathbb{A} -\frac{1}{2} \left[ \Lambda,\mathbb{A}\right] + \frac{1}{12}\left[\Lambda,\left[ \Lambda,\mathbb{A}\right] \right]+\cdots,
\end{aligned}
\end{equation}
where $ {\rm ad}_{\Lambda}^j(\bullet)$ represents the $j$th order adjoint operator and the parameters $B_j$ are Bernoulli numbers. Then $\Lambda(t)$ can be represented as~\cite{blanes2009magnus}
\begin{equation} \label{con:LambdaDynamics}
\begin{aligned}
\Lambda(t)  =& \sum_{j=1}^{\infty}\Lambda_j(t)\\
=& \int_0^{t} \mathbb{A}(\tau)d\tau - \frac{1}{2} \int_0^{t}  \left[\int_0^{\tau} \mathbb{A}\left(\tau_1\right) d\tau_1, \mathbb{A}(\tau) \right]d\tau + \frac{1}{6}  \int_0^{t}\int_0^{\tau} \int_0^{\tau_1} \\
&\left(\left[\mathbb{A}\left(\tau\right) ,\left[  \mathbb{A}\left(\tau_1\right), \mathbb{A}\left(\tau_2\right)\right] \right] +\left[\mathbb{A}\left(\tau_2\right) ,\left[  \mathbb{A}\left(\tau_1\right), \mathbb{A}(\tau)\right] \right]\right)d\tau_2 d\tau_1d\tau +\cdots,
\end{aligned}
\end{equation}
where the mathematical formats of higher-order terms are omitted. Based on this, the truncation error of $p$th order Magnus expansion can be evaluated as
\begin{equation} \label{con:errordef}
\begin{aligned}
e_p =  \Lambda(t) -\sum_{j=1}^{p} \Lambda_j(t) \approx  \left| \Lambda_{p+1}(t) \right|. 
\end{aligned}
\end{equation}

For the simplified case with a time-invariant Hamiltonian and  Markovian integral kernel as in Eq.~(\ref{con:MarkovAlpha}),  $\left[ \mathbb{A}\left(\tau_1\right), \mathbb{A}(\tau) \right] \equiv 0$ and the solution of Eq.~(\ref{con:LambdaDynamics}) is $\Lambda(t)  = \mathbb{A}t$.

However, for the non-Markovian case, the Magnus integrator is determined by algebraic commutators such as $\left[\mathbb{A}\left(\tau_1\right) , \mathbb{A}(\tau) \right]$ and other higher-order terms in Eq.~(\ref{con:LambdaDynamics}). To further analyze this, we rewrite $\mathbb{A}(t)$ by separating it into time-invariant and time-varying parts as
\begin{equation} \label{con:Atcombine}
\begin{aligned}
\mathbb{A}(t) = \mathbb{H}+ V(t),
\end{aligned}
\end{equation}
where $\mathbb{H} =  i \left( H^{\rm T} \otimes \mathbb{I}_n   - \mathbb{I}_n\otimes H\right)$ is for the coherent evolution of the quantum system, and $V(t) = O^{*} \otimes L + L^{*} \otimes O -L^{\rm T} O^{*} \otimes \mathbb{I}_n -\mathbb{I}_n  \otimes  L^{\dag}O$ represents the dissipation to the environment. In the following, we first study the first-order Magnus expansion, then generalize to higher-order circumstances.

\subsubsection{First-order Magnus expansion}
Consider the integral in $\left[ t_0,t_0+h\right]$, according to Eq.~(\ref{con:LambdaDynamics}), 

\begin{subequations}   \label{con:Omega12}
\begin{numcases}{}
\Lambda_1 = \int_{t_0}^{t_0+h} \mathbb{A}(\tau)d\tau ,\label{Omega1}\\
\Lambda_2= - \frac{1}{2} \int_{t_0}^{t_0+h} \int_{t_0}^{\tau} \left[ \mathbb{A}\left(\tau_1\right) , \mathbb{A}(\tau) \right]d\tau_1 d\tau ,\label{Omega2}
\end{numcases}
\end{subequations}
then the truncation error for the first-order Magnus expansion can be evaluated by the Frobenius norm as
\begin{equation} \label{con:UpperboundO2}
\begin{aligned}
\left\|\Lambda_2 \right\| & \leq \frac{1}{2}\int_{t_0}^{t_0+h} \int_{t_0}^{\tau} \left\|\left[ \mathbb{A}\left(\tau_1\right) , \mathbb{A}(\tau) \right] \right\| d\tau_1 d\tau \\
& \leq \frac{h^2}{4} \max_{\tau,\tau_1\in\left[t_0,t_0+h \right]}  \left\|\left[ \mathbb{A}\left(\tau\right) , \mathbb{A}\left(\tau_1\right) \right] \right\|.
\end{aligned}
\end{equation}

For the upper bound in Eq.~(\ref{con:UpperboundO2}), the Frobenius norm of $\mathbb{A}(t)$ satisfies that 
\begin{equation} \label{con:Atnorm}
\begin{aligned}
\left\|\mathbb{A}(t)\right\| &\leq \left\|\mathbb{H}\right\| + \left\|V(t)\right\| \\
&\leq 2\sqrt{n} \left\|H\right\| + 2 \left\|L\right\| \left\|O\right\| + 2\sqrt{n}\left\|L\right\| \left\|O\right\|\\
& = 2\sqrt{n} \left\|H\right\| + 2 \left(1+\sqrt{n} \right)\left\|L\right\| \left\|O\right\|,
\end{aligned}
\end{equation}
where due to the definition of operator $O$ in Eq.~(\ref{con:MasterEq}), 
\begin{equation} 
\begin{aligned}
\left\|O\right\| &\leq \sum_{j=1}^N \left| f_j(t) \right| \left\|L_j\right\| .
\end{aligned}
\end{equation}

Obviously, for reduced Markovian dynamics, the truncation error is equal to zero because $\mathbb{A}$ is time-invariant and $\left[ \mathbb{A}\left(\tau\right) , \mathbb{A}\left(\tau_1\right) \right]\equiv 0$. However, for general non-Markovian dynamics, according to the combination of time-invariant and time-varying components in Eq.~(\ref{con:Atcombine}), 
\begin{equation}  \label{con:ErrorCommutator}
\begin{aligned}
 \left\|\left[ \mathbb{A}\left(\tau\right) , \mathbb{A}\left(\tau_1\right) \right] \right\| &=  \left\|\left[  \mathbb{H}+ V\left(\tau\right) ,  \mathbb{H}+ V\left(\tau_1\right) \right] \right\|\\
 & =  \left\|\left[  \mathbb{H},  V\left(\tau_1\right) -V\left(\tau\right)   \right] + \left[V\left(\tau\right),V\left(\tau_1\right) \right]\right\|\\
 & \leq  \left\|\left[  \mathbb{H} ,  V\left(\tau_1\right) -V\left(\tau\right)   \right] \right\| + \left\| \left[V\left(\tau\right),V\left(\tau_1\right) \right]\right\|\\
 & \leq \sqrt{2} \left\|\mathbb{H} \right\| \left\| V\left(\tau_1\right) -V\left(\tau\right)\right\| + \sqrt{2} \left\|V\left(\tau\right) \right\| \left\| V\left(\tau_1\right)\right\|.
\end{aligned}
\end{equation}

Additionally, by further analyzing the mathematical formats of $V(t)$, the upper bound above can be evaluated more precisely. To clarify this, we  consider the evolution around the middle time point $t_m = t_0+h/2$, then
\begin{equation} 
\begin{aligned}
 \mathbb{A}\left(t\right) =  \mathbb{A}\left(t_m\right) + \dot{\mathbb{A}}\left(t_m\right)\left(t-t_m\right) + \mathcal{O}\left(h^2\right),
\end{aligned}
\end{equation}
where the last term represents the infinitesimal of higher orders. Then the commutator in Eq.~(\ref{con:ErrorCommutator}) can be further simplified as
\begin{equation}  \label{con:ErrorCommutatorMiddle}
\begin{aligned}
&\left[ \mathbb{A}\left(\tau\right) , \mathbb{A}\left(\tau_1\right) \right] \\
=& \left[  \mathbb{A}\left(t_m\right) + \dot{\mathbb{A}}\left(t_m\right)\left(\tau-t_m\right) + \mathcal{O}\left(h^2\right) ,   \mathbb{A}\left(t_m\right) + \dot{\mathbb{A}}\left(t_m\right)\left(\tau_1-t_m\right) + \mathcal{O}\left(h^2\right)  \right] \\
=& \left[  \mathbb{A}\left(t_m\right) + \dot{V}\left(t_m\right)\left(\tau-t_m\right) + \mathcal{O}\left(h^2\right) ,   \mathbb{A}\left(t_m\right) + \dot{V}\left(t_m\right)\left(\tau_1-t_m\right) + \mathcal{O}\left(h^2\right)  \right]\\
=& \left[ \mathbb{A}\left(t_m\right),\dot{V}\left(t_m\right)\right]\left(\tau_1-\tau\right) + \mathcal{O}\left(h^2\right) .
\end{aligned}
\end{equation}
Based on this, the integrator in Eq.~(\ref{Omega2}) can be rewritten as
\begin{equation}  \label{con:Omega2Middle}
\begin{aligned}
\Lambda_2&= \frac{1}{2} \int_{t_0}^{t_0+h} \int_{t_0}^{\tau} \left[\mathbb{A}(\tau)  ,\mathbb{A}\left(\tau_1\right)\right]d\tau_1 d\tau\\
& \approx \frac{1}{2}\left[ \mathbb{A}\left(t_m\right),\dot{V}\left(t_m\right)\right] \int_{t_0}^{t_0+h} \int_{t_0}^{\tau} \left(\tau_1-\tau\right)  d\tau_1 d\tau\\
& = -\frac{1}{12}\left[ \mathbb{A}\left(t_m\right),\dot{V}\left(t_m\right)\right] h^3.
\end{aligned}
\end{equation}

For the truncation error in the last line of Eq.~(\ref{con:ErrorCommutator}), we rewrite $V(t)$ as
\begin{equation} \label{con:VtRewrite}
\begin{aligned}
V(t)=&\sum_{j =1}^N \left[ f_j(t) \left(   L^{*} \otimes L_j - \mathbb{I}_n  \otimes  L^{\dag}L_j \right) + f_j^*(t) \left( L_j \otimes L-L^{\rm T} L_j \otimes \mathbb{I}_n\right)\right]\\
\triangleq & \sum_{j=1}^{2N} \Gamma_j(t) V_j,
\end{aligned}
\end{equation}
where $\Gamma_j(t) = f_j(t)$, $\Gamma_{j+N}(t) = f_j^*(t)$, and the operators

\begin{subequations}   \label{con:V1to4}
\begin{numcases}{}
V_j =  L^{*} \otimes L_j - \mathbb{I}_n  \otimes  L^{\dag}L_j,\\
V_{j+N} =   L_j \otimes L-L^{\rm T} L_j \otimes \mathbb{I}_n.
\end{numcases}
\end{subequations}
Then the truncation error can be evaluated according to the following theorem.
\begin{theorem} \label{FirstOrderError}
When solving  the Liouville equation (\ref{con:SMEVecNostochastic}) with Magnus expansion, given $V(t)$ in Eq.~(\ref{con:VtRewrite}) for the non-Markovian interaction between the quantum system and environment, the truncation error of the first-order expansion can be evaluated as
\begin{equation} \label{con:errorlemme}
\begin{aligned}
e_1\leq \frac{h^3}{12} \left( \sum_{j=1}^{2N} \left|\dot{\Gamma}_j\left(t_m\right) \right|\left\| \left[\mathbb{H}, V_j \right]\right\| + \sum_{j,l=1,j\neq l}^{2N} \left| \Gamma_l\left(t_m\right) \dot{\Gamma}_j\left(t_m\right) \right| \left\| \left[V_l, V_j \right]\right\| \right) .
\end{aligned}
\end{equation}
\end{theorem}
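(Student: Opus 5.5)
We plan to build directly on the midpoint approximation already derived in Eq.~(\ref{con:Omega2Middle}). There the leading part of the first-order truncation error is
\[
e_1 \approx \left\|\Lambda_2\right\| \approx \frac{h^3}{12}\left\| \left[ \mathbb{A}\left(t_m\right),\dot{V}\left(t_m\right)\right]\right\|,
\]
up to terms of higher order in $h$, and we interpret the inequality in Theorem~\ref{FirstOrderError} at that order. The whole task is therefore to bound the commutator norm on the right. We do this by substituting the structured form of $V(t)$ from Eq.~(\ref{con:VtRewrite}) and applying the triangle inequality.

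\textbf{Expanding the commutator.} Because $\mathbb{H}$ and each $V_j$ are time-invariant, we have
\[
\dot V(t_m)=\sum_{j=1}^{2N}\dot\Gamma_j(t_m)V_j \quad\text{and}\quad \mathbb{A}(t_m)=\mathbb{H}+\sum_{l=1}^{2N}\Gamma_l(t_m)V_l .
\]
Bilinearity of the commutator then gives
\[
\left[\mathbb{A}(t_m),\dot V(t_m)\right]=\sum_{j}\dot\Gamma_j(t_m)\left[\mathbb{H},V_j\right]+\sum_{j,l}\Gamma_l(t_m)\dot\Gamma_j(t_m)\left[V_l,V_j\right].
\]
In the double sum the diagonal terms $l=j$ vanish identically because $[V_j,V_j]=0$. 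This is why the theorem's second sum runs only over $j\neq l$.

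\textbf{Bounding the norm.} We take the Frobenius norm and apply the triangle inequality term by term, using $\|cX\|=|c|\,\|X\|$ for scalars $c$. This yields exactly the bracketed quantity in Eq.~(\ref{con:errorlemme}). Multiplying by $h^3/12$ completes the bound.

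\textbf{Main obstacle.} The delicate step is controlling the approximation in Eq.~(\ref{con:Omega2Middle}), namely showing that the neglected terms really are of higher order. This requires expanding the commutator in Eq.~(\ref{con:ErrorCommutatorMiddle}) carefully. The zeroth-order term $[\mathbb{A}(t_m),\mathbb{A}(t_m)]$ cancels, and the first-order cross terms combine to $[\mathbb{A}(t_m),\dot V(t_m)](\tau_1-\tau)$. The remainder is $\mathcal{O}(h^2)$ under boundedness and smoothness of the $f_j$, which follows from the Assumption together with differentiability of $f_j$ given by the integral form in Eq.~(\ref{con:blackgdefine}). After the double integral over the triangle of area $h^2/2$, this remainder contributes only $\mathcal{O}(h^4)$.

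One further point belongs to the same step. The truncation error is only approximately $\|\Lambda_2\|$, as stated in Eq.~(\ref{con:errordef}). The higher-order Magnus terms $\Lambda_3,\ldots$ are at least $\mathcal{O}(h^3)$. Moreover, $\Lambda_3$ consists of nested commutators, each of which contains a difference $\mathbb{A}(\tau_i)-\mathbb{A}(\tau_j)=\mathcal{O}(h)$, so $\Lambda_3$ is in fact $\mathcal{O}(h^4)$. Hence the displayed estimate captures the leading $h^3$ behaviour, and the inequality holds up to $\mathcal{O}(h^4)$ corrections.
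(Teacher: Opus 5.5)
Your proposal is correct and is essentially the paper's proof. Both start from the midpoint approximation $\Lambda_2\approx-\frac{h^3}{12}[\mathbb{A}(t_m),\dot V(t_m)]$ of Eq.~(\ref{con:Omega2Middle}), expand by bilinearity using Eq.~(\ref{con:VtRewrite}), drop the diagonal terms because $[V_j,V_j]=0$, and apply the triangle inequality. Your extra control of the neglected $\mathcal{O}(h^4)$ remainder goes beyond the paper's one-line argument and is sound, but it needs $f_j$ twice differentiable with bounded second derivative, which is an additional hypothesis rather than a consequence of the boundedness Assumption or the unspecified integral form (\ref{con:blackgdefine}).
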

\begin{proof}
The truncation error for the first-order Magnus expansion can be determined by Eq.~(\ref{con:Omega2Middle}). Because $\left[\Gamma_j(t)V_j,\dot{\Gamma}_j(t)V_j \right] =0$, the truncation error can be simplified as Eq.~(\ref{con:errorlemme}).
\end{proof}

\subsubsection{Second-order Magnus expansion}
For second-order Magnus expansions based on Eq.~(\ref{con:LambdaDynamics}), the truncation error can be evaluated by third-order commutators. Generalized from Eq.~(\ref{con:ErrorCommutatorMiddle}), the following commutator can be represented around the time point $t_m$ as
\begin{equation}  \label{con:TwoOrderPart1}
\begin{aligned}
&\left[\mathbb{A}\left(\tau\right) ,\left[  \mathbb{A}\left(\tau_1\right), \mathbb{A}\left(\tau_2\right)\right] \right]\\
= &\left[\mathbb{A}\left(\tau\right) ,\left[\mathbb{A}\left(t_m\right), \dot{V}\left(t_m\right) \right]\left(\tau_2-\tau_1 \right) +  \mathcal{O}\left(h^2\right) \right]\\
=&\left[\mathbb{A}\left(t_m\right) + \dot{V}\left(t_m\right)\left(\tau-t_m\right) ,\left[\mathbb{A}\left(t_m\right), \dot{V}\left(t_m\right) \right]\left(\tau_2-\tau_1 \right) +  \mathcal{O}\left(h^2\right) \right]\\
=&\left(\tau_2-\tau_1 \right) \left[\mathbb{A}\left(t_m\right) ,\left[\mathbb{A}\left(t_m\right), \dot{V}\left(t_m\right) \right]  \right] \\
&+\left(\tau-t_m\right) \left(\tau_2-\tau_1 \right) \left[\dot{V}\left(t_m\right) ,\left[\mathbb{A}\left(t_m\right), \dot{V}\left(t_m\right) \right] \right] +  \mathcal{O}\left(h^2\right),
\end{aligned}
\end{equation}
and similarly, 
\begin{equation}  \label{con:TwoOrderPart2}
\begin{aligned}
&\left[\mathbb{A}\left(\tau_2\right) ,\left[  \mathbb{A}\left(\tau_1\right), \mathbb{A}\left(\tau\right)\right] \right]\\
=&\left(\tau-\tau_1 \right) \left[\mathbb{A}\left(t_m\right) ,\left[\mathbb{A}\left(t_m\right), \dot{V}\left(t_m\right) \right]  \right] \\
&+\left(\tau_2-t_m\right) \left(\tau-\tau_1 \right) \left[\dot{V}\left(t_m\right) ,\left[\mathbb{A}\left(t_m\right), \dot{V}\left(t_m\right) \right] \right] +  \mathcal{O}\left(h^2\right).
\end{aligned}
\end{equation}
By summarizing Eq.~(\ref{con:TwoOrderPart1})  and Eq.~(\ref{con:TwoOrderPart2}), we can derive
\begin{equation}  \label{con:TwoOrderSum}
\begin{aligned}
&\left[\mathbb{A}\left(\tau\right) ,\left[  \mathbb{A}\left(\tau_1\right), \mathbb{A}\left(\tau_2\right)\right] \right] + \left[\mathbb{A}\left(\tau_2\right) ,\left[  \mathbb{A}\left(\tau_1\right), \mathbb{A}\left(\tau\right)\right] \right]\\
=&\left(\tau + \tau_2-2\tau_1 \right) \left[\mathbb{A}\left(t_m\right) ,\left[\mathbb{A}\left(t_m\right), \dot{V}\left(t_m\right) \right]  \right] \\
&+t_m\left(2\tau_1 - \tau-\tau_2 \right) \left[\dot{V}\left(t_m\right) ,\left[\mathbb{A}\left(t_m\right), \dot{V}\left(t_m\right) \right] \right] \\
&+\left[ \tau \left(\tau_2-\tau_1 \right)+ \tau_2 \left(\tau-\tau_1 \right)\right]  \left[\dot{V}\left(t_m\right) ,\left[\mathbb{A}\left(t_m\right), \dot{V}\left(t_m\right) \right] \right] +  \mathcal{O}\left(h^2\right).
\end{aligned}
\end{equation}

Then we have the following theorem on the truncation error of the second-order Magnus expansion.
\begin{theorem} \label{SecondOrderError}
When solving  Eq.~(\ref{con:SMEVecNostochastic}) with Magnus expansion, the truncation error of the second-order expansion can be evaluated as
\begin{equation} \label{con:errorlemmeTwoOrder}
\begin{aligned}
e_2 = \frac{h^5}{240} \left\|\left[\sum_{k=1}^{2N} \dot{\Gamma}_k\left(t_m\right) V_k ,\left[\mathbb{H}, \sum_{j=1}^{2N} \dot{\Gamma_j}\left(t_m\right) V_j \right] + \sum_{j,l=1,j\neq l}^{2N}  \Gamma_l\left(t_m\right) \dot{\Gamma}_j\left(t_m\right)  \left[V_l, V_j \right] \right] \right\| .
\end{aligned}
\end{equation}
\end{theorem}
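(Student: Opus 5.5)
The plan is to estimate $e_2\approx\|\Lambda_3\|$ as in Eq.~(\ref{con:errordef}). Here $\Lambda_3$ is the third Magnus term from Eq.~(\ref{con:LambdaDynamics}), taken over the step $[t_0,t_0+h]$:
\begin{equation*}
\Lambda_3=\frac{1}{6}\int_{t_0}^{t_0+h}\int_{t_0}^{\tau}\int_{t_0}^{\tau_1}\Big(\big[\mathbb{A}(\tau),[\mathbb{A}(\tau_1),\mathbb{A}(\tau_2)]\big]+\big[\mathbb{A}(\tau_2),[\mathbb{A}(\tau_1),\mathbb{A}(\tau)]\big]\Big)d\tau_2\,d\tau_1\,d\tau .
\end{equation*}
I will substitute the midpoint expansion (\ref{con:TwoOrderSum}) into the integrand and integrate each scalar polynomial weight over the simplex $t_0\le\tau_2\le\tau_1\le\tau\le t_0+h$. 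To make the computation clean, I first shift variables to $s=\tau-t_m$, $s_1=\tau_1-t_m$, $s_2=\tau_2-t_m$, so that the simplex is centred and $t_m$ drops out of the weights. Rather than carrying the $t_m$ and $\tau(\tau_2-\tau_1)+\tau_2(\tau-\tau_1)$ terms separately as in (\ref{con:TwoOrderSum}), I will use the equivalent form $s(s_2-s_1)+s_2(s-s_1)$.

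In these variables the integrand splits into two pieces. The first is $(s+s_2-2s_1)[\mathbb{A}(t_m),[\mathbb{A}(t_m),\dot V(t_m)]]$. The second is $\big(s(s_2-s_1)+s_2(s-s_1)\big)[\dot V(t_m),[\mathbb{A}(t_m),\dot V(t_m)]]$. The key observation I expect is that the linear weight integrates to zero over the centred simplex. This follows from the symmetry $(s,s_1,s_2)\mapsto(-s_2,-s_1,-s)$, which maps the simplex to itself and flips the sign of $s+s_2-2s_1$. The double-commutator term in $\mathbb{A}(t_m)$ therefore contributes nothing at this order.

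Only the quadratic weight then survives. Its integral $\int\!\!\int\!\!\int \big(s(s_2-s_1)+s_2(s-s_1)\big)$ over the simplex is a routine polynomial integration of order $h^5$. I expect it to give $\pm h^5/40$, which after the prefactor $1/6$ yields the coefficient $h^5/240$. This step must be checked carefully, because the constant in the statement depends on it.

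Finally, I will expand $\dot V(t_m)=\sum_k\dot\Gamma_k(t_m)V_k$ and $\mathbb{A}(t_m)=\mathbb{H}+\sum_l\Gamma_l(t_m)V_l$ inside the inner commutator $[\mathbb{A}(t_m),\dot V(t_m)]$. The diagonal terms $[\Gamma_jV_j,\dot\Gamma_jV_j]$ vanish, exactly as in the proof of Theorem~\ref{FirstOrderError}. This gives the inner bracket $[\mathbb{H},\sum_j\dot\Gamma_jV_j]+\sum_{j\neq l}\Gamma_l\dot\Gamma_j[V_l,V_j]$, and the outer commutator with $\sum_k\dot\Gamma_kV_k$ produces the expression (\ref{con:errorlemmeTwoOrder}) once the Frobenius norm is taken.

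The main obstacle is the bookkeeping of orders. The leading surviving term must be genuinely $O(h^5)$, so the $\mathcal{O}(h^2)$ remainders in (\ref{con:TwoOrderPart1})–(\ref{con:TwoOrderPart2}), after triple integration, must not contribute at the same order. In particular, the second-order Taylor terms $\ddot{\mathbb{A}}$ could feed $[\mathbb{A},[\mathbb{A},\ddot V]]$ contributions at $h^5$. I would handle this as the paper does, treating those contributions as part of the neglected higher-order remainder so that the stated formula is the leading-order estimate. If needed, I would note explicitly that it is an equality only up to $\mathcal{O}(h^6)$ plus such terms.
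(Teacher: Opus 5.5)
Your proposal is correct and follows essentially the same route as the paper: insert the midpoint expansion (\ref{con:TwoOrderSum}) into $\Lambda_3$, discard the weight $\tau+\tau_2-2\tau_1$, integrate the quadratic weight, and expand via (\ref{con:VtRewrite}). Your centred involution $(s,s_1,s_2)\mapsto(-s_2,-s_1,-s)$ is a cleaner justification for the discarding step than the paper's ``odd function'' remark, and the quadratic weight does integrate to $-h^5/40$, giving $-\frac{h^5}{240}[\dot V(t_m),[\mathbb{A}(t_m),\dot V(t_m)]]$. Your caveat is also well founded: the $\ddot V$ terms contribute $\frac{h^5}{720}[\mathbb{A}(t_m),[\mathbb{A}(t_m),\ddot V(t_m)]]$ to $\Lambda_3$ at the same order, which the paper silently absorbs into its $\mathcal{O}(h^2)$ remainder, so the stated formula holds only in that same leading-order, $\ddot V$-neglected sense.
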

\begin{proof}
For the third-order term in Eq.~(\ref{con:LambdaDynamics}), we consider the following integral around the middle time point $t_m$, 
\begin{equation}  \label{con:Omega3Middle}
\begin{aligned}
&\Lambda_3= \frac{1}{6} \int_{t_0}^{t_0+h} \int_{t_0}^{\tau}\int_{t_0}^{\tau_1} \left(\left[\mathbb{A}\left(\tau\right) ,\left[  \mathbb{A}\left(\tau_1\right), \mathbb{A}\left(\tau_2\right)\right] \right] +\left[\mathbb{A}\left(\tau_2\right) ,\left[  \mathbb{A}\left(\tau_1\right), \mathbb{A}(\tau)\right] \right]\right)d\tau_2 d\tau_1d\tau\\
& \approx \frac{1}{6}\left[\dot{V}\left(t_m\right) ,\left[\mathbb{A}\left(t_m\right), \dot{V}\left(t_m\right) \right] \right] \int_{t_0}^{t_0+h} \int_{t_0}^{\tau}\int_{t_0}^{\tau_1} \left[ \tau \left(\tau_2-\tau_1 \right)+ \tau_2 \left(\tau-\tau_1 \right)\right]d\tau_2 d\tau_1d\tau  \\
& = -\frac{h^5}{240}\left[\dot{V}\left(t_m\right) ,\left[\mathbb{A}\left(t_m\right), \dot{V}\left(t_m\right) \right] \right] ,
\end{aligned}
\end{equation}
where the other terms are equal to zero because $\tau + \tau_2-2\tau_1$ is an odd function of $\tau,\tau_1,\tau_2$ in Eq.~(\ref{con:TwoOrderSum}). Further combined with Eq.~(\ref{con:VtRewrite}), the truncation error can be derived as Eq.~(\ref{con:errorlemmeTwoOrder}).
\end{proof}

\begin{remark}
Theorem~\ref{FirstOrderError} and Theorem~\ref{SecondOrderError} illustrate that the first- and second-order truncation errors are determined by the Lie algebras generated by $\mathbb{H}$ for the quantum system's free Hamiltonian and $\Gamma_j(t)V_j$ for the quantum system's time-varying non-Markovian dissipations to the environment. 
\end{remark}

\subsection{Convergence analysis on the simplified linear dynamics} Now we analyze the convergence of the Magnus expansion based on the non-Markovian dynamics in Eq.~(\ref{con:LambdaDynamics}). Due to the flow map in Eq.~(\ref{con:flowmap}), $\vec{\rho}(t)$ can be represented as the product of its initial value and a matrix $e^{\Lambda(t)}$ with dimension $n^2\times n^2$.
According to \cite{casas2007sufficient}, the convergence of the Magnus expansion means that the infinite series in Eq.~(\ref{con:LambdaDynamics}) converges and the real-time quantum state can be solved by Eq.~(\ref{con:rhoct}).

Based on Eq.~(\ref{con:LambdaDynamics}), when the following condition is satisfied, 
\begin{equation} \label{con:ConvergenceCondition}
\begin{aligned}
\int_0^t \left\|\mathbb{A}(\tau) \right\|_2 d\tau < \pi,
\end{aligned}
\end{equation}
where $\left\| \bullet \right\|_2$ represents the 2-norm or spectral norm, the Magnus expansion converges~\cite{casas2007sufficient,moan2008convergence}. This has been adopted to analyze the Markovian dynamics of a two-level system in \cite{begzjav2020magnus}.

Generalized from the model in Eq.~(\ref{con:MasterEq}), we rewrite the Hamiltonian as

\begin{equation} 
\begin{aligned}
H = H_0 + H_{\rm C}(t),
\end{aligned}
\end{equation}
where $H_0$ is the time-invariant free Hamiltonian, and $H_{\rm C}(t)$ is the time-varying control Hamiltonian. We denote the rotational operator $U_0(t) = e^{-iH_0t}$, the density matrix for the quantum state represented in the interaction picture reads $\rho_{\rm I}(t) = U_0^{\dag}(t)\rho(t) U_0(t)$, and the operators representing the interaction between quantum system and environment are 

\begin{subequations}   \label{con:OperatorInteraction}
\begin{numcases}{}
L_{\rm I}(t) = U_0^{\dag}(t)L U_0(t) ,\\
O_{\rm I}(t) = U_0^{\dag}(t)O U_0(t). \label{OI}
\end{numcases}
\end{subequations}

According to the derivations in Appendix ~\ref{Sec:IntractionPicture}, the non-Markovian equation in the interaction picture can be represented as
\begin{equation} \label{con:dRhoInt}
\begin{aligned}
\frac{d\rho_{\rm I}}{dt} =-i \left[ H_{\rm CI}(t) ,\rho_{\rm I} \right]+ \left[ L_{\rm I}(t), \rho_{\rm I}  O_{\rm I}^{\dag}(t)  \right] + \left[ O_{\rm I}(t) \rho_{\rm I}, L_{\rm I}^{\dag}(t)\right],
\end{aligned}
\end{equation}
where $H_{\rm CI} (t)= U_0^{\dag}(t)H_{\rm C} U_0(t)$. We denote $\vec{\rho}_{\rm I} = {\rm vec}\left[\rho_{\rm I}\right]$, and similar to Eq.~(\ref{con:SMEVecNostochastic}),
\begin{equation} \label{con:rhoIvec}
\begin{aligned}
\frac{d\vec{\rho}_{\rm I}}{dt} = V_{\rm I}(t) \vec{\rho}_{\rm I} ,
\end{aligned}
\end{equation}
where $V_{\rm I}(t) =i \left( H_{\rm CI}^{\rm T} \otimes \mathbb{I}_n   - \mathbb{I}_n\otimes H_{\rm CI}\right)  +  O_{\rm I}^{*} \otimes L_{\rm I} + L_{\rm I}^{*} \otimes O_{\rm I} -L_{\rm I}^{\rm T} O_{\rm I}^{*} \otimes \mathbb{I}_n -\mathbb{I}_n  \otimes  L_{\rm I}^{\dag} O_{\rm I}$, and $O_{\rm I}(t)  =\sum_{j=1}^Nf_j(t) U_0^{\dag}(t)L_jU_0(t)$.

\begin{theorem} \label{convergence}
In the interaction picture, when 
\begin{equation}  \label{con:InterPcondtion}
\begin{aligned}
\int_0^t \left\|V_{\rm I}(\tau) \right\|_2 d\tau < \pi,
\end{aligned}
\end{equation}
the Magnus expansion for solving Eq.~(\ref{con:rhoIvec}) converges. If the control Hamiltonian satisfies $\left\| H_{\rm C}\right\|_2 \ll\left\| H_0 \right\|_2 $, then
\begin{equation} \label{con:Inequality}
\begin{aligned}
{\rm{sup}}\left(\int_0^t \left\|V_{\rm I}(\tau) \right\|_2 d\tau \right) \leq {\rm{sup}}\left(\int_0^t \left\|\mathbb{A}(\tau) \right\|_2 d\tau \right),
\end{aligned}
\end{equation}
with $\mathbb{A}(t)$ given by Eq.~(\ref{con:SMEVecNostochastic}). 
\end{theorem}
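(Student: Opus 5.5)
The first claim follows by applying the sufficient condition (\ref{con:ConvergenceCondition}) of \cite{casas2007sufficient,moan2008convergence} to the linear system (\ref{con:rhoIvec}). That criterion holds for any linear time-varying system $\dot{x}=M(t)x$ with locally integrable coefficient matrix. We check that $V_{\rm I}(t)$ qualifies: it is continuous because $U_0(t)$ is smooth and unitary. It is also bounded, because the $f_j(t)$ are bounded by the Assumption and $H_{\rm C}$ is bounded. Substituting $M=V_{\rm I}$ then gives convergence of the Magnus series for $\vec{\rho}_{\rm I}$ whenever (\ref{con:InterPcondtion}) holds. Since $\vec{\rho}(t)=(U_0^{*}(t)\otimes U_0(t))\vec{\rho}_{\rm I}(t)$, the original state is recovered.

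For the inequality (\ref{con:Inequality}), we compare $\|V_{\rm I}(\tau)\|_2$ with $\|\mathbb{A}(\tau)\|_2$ pointwise. Write $\mathcal{U}(t)=U_0^{\rm T}(t)\otimes U_0^{\dag}(t)$, the vectorized form of $X\mapsto U_0^\dag X U_0$. This is a unitary $n^2\times n^2$ matrix, and
\begin{equation*}
V_{\rm I}(t)=\mathcal{U}(t)\Big(\mathbb{H}_{\rm C}(t)+V(t)\Big)\mathcal{U}(t)^{\dag},
\qquad \mathbb{H}_{\rm C}=i\left(H_{\rm C}^{\rm T}\otimes\mathbb{I}_n-\mathbb{I}_n\otimes H_{\rm C}\right).
\end{equation*}
This uses $(A\otimes B)(C\otimes D)=AC\otimes BD$ and the fact that conjugating each Kronecker factor by $U_0$ produces $L_{\rm I},O_{\rm I},H_{\rm CI}$. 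Unitary invariance of the spectral norm then gives $\|V_{\rm I}(t)\|_2=\|\mathbb{H}_{\rm C}(t)+V(t)\|_2$. By comparison, $\mathbb{A}(t)=\mathbb{H}_0+\mathbb{H}_{\rm C}(t)+V(t)$, where $\mathbb{H}_0=i(H_0^{\rm T}\otimes\mathbb{I}_n-\mathbb{I}_n\otimes H_0)$. The passage to the interaction picture therefore simply removes the drift $\mathbb{H}_0$.

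The key step is to show that removing $\mathbb{H}_0$ does not increase the supremum of the integrated norm. Two routes are available. The first is direct: $\|\mathbb{H}_{\rm C}+V\|_2\le 2\|H_{\rm C}\|_2+\|V\|_2$. On the other side, $\|\mathbb{A}\|_2\ge\|\mathbb{H}_0\|_2-2\|H_{\rm C}\|_2-\|V\|_2$, and $\|\mathbb{H}_0\|_2$ equals the spectral width $E_{\max}-E_{\min}$ of $H_0$. The second route takes the supremum over admissible parameters, meaning bounded $f_j$ and small $H_{\rm C}$. Since $\mathbb{H}_0$ is skew-Hermitian, its spectrum is $\{i(E_k-E_l)\}$. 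One then chooses a test vector $\vec{x}$ aligned with $\mathbb{H}_0$'s extremal eigenvector, or uses the triangle inequality in the form $\sup\|\mathbb{H}_0+X\|\ge\sup\|X\|$ over symmetric families. Under the hypothesis $\|H_{\rm C}\|_2\ll\|H_0\|_2$, the $\mathbb{H}_0$ contribution dominates $\|\mathbb{A}\|_2$. This yields $\|V_{\rm I}\|_2\le\|\mathbb{A}\|_2$ up to the regime where the dissipator $V$ is comparable to $H_0$. That regime is handled by taking suprema, since $\sup_X\|\mathbb{H}_0+X\|_2\ge\sup_X\|X\|_2$ whenever the admissible family of $X$ is closed under $X\mapsto -X$ or phase rotation. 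The phases $f_j\mapsto e^{i\theta}f_j$ allowed by bounded amplitudes give the needed rotation. Integrating over $[0,t]$ and taking sup gives (\ref{con:Inequality}).

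The main obstacle is this last comparison. A pointwise inequality $\|\mathbb{H}_{\rm C}+V\|_2\le\|\mathbb{H}_0+\mathbb{H}_{\rm C}+V\|_2$ is false in general, because $\mathbb{H}_0$ could partially cancel $V$. The proof must therefore genuinely use both the supremum and the smallness of $H_{\rm C}$. I would first try the symmetrization argument: average $\|\mathbb{H}_0+X\|_2$ and $\|\mathbb{H}_0-X\|_2$ (or over phases), and use convexity of the norm to get $\max\ge\|X\|_2$ after adjusting for the non-symmetrizable part $\mathbb{H}_{\rm C}$. That part is exactly where the assumption $\|H_{\rm C}\|_2\ll\|H_0\|_2$ is invoked to absorb the error.
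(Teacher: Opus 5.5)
Your treatment of the convergence claim is fine and is what the paper does: it simply applies the criterion of Casas and Moan to Eq.~(\ref{con:rhoIvec}). Your identity $\|V_{\rm I}(t)\|_2=\|\mathbb{H}_{\rm C}(t)+V(t)\|_2$, obtained from the unitary $\mathcal{U}(t)=U_0^{\rm T}(t)\otimes U_0^{\dag}(t)$, is also correct. The gap is the inequality (\ref{con:Inequality}), which you never actually prove.

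Route 1 gives $\|V_{\rm I}\|_2\le\|\mathbb{A}\|_2$ only when $\|\mathbb{H}_0\|_2\ge 4\|H_{\rm C}\|_2+2\|V\|_2$. That is an extra hypothesis on the dissipator, and the theorem does not make it. Route 2 appeals to a symmetric ``admissible family'' that is never specified. It then defers the handling of $\mathbb{H}_{\rm C}$ to an absorption step (``I would first try\ldots'') that is never carried out.

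The mechanism you propose for that step cannot work as stated. $\|\mathbb{H}_0\|_2=E_{\max}-E_{\min}$ is unchanged under $H_0\mapsto H_0+c\,\mathbb{I}_n$, while $\|H_0\|_2$ can be made arbitrarily large this way. So the hypothesis $\|H_{\rm C}\|_2\ll\|H_0\|_2$ gives no control of $\|\mathbb{H}_{\rm C}\|_2$ relative to $\|\mathbb{H}_0\|_2$. Your assertion that ``the $\mathbb{H}_0$ contribution dominates $\|\mathbb{A}\|_2$'' is therefore false in general.

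The paper never compares the actual norms; its argument stays at the level of triangle-inequality majorants, and that is how its ``$\sup$'' is effectively used.
\begin{itemize}
\item For the Schr\"odinger picture it bounds $\int_0^t\|\mathbb{A}\|_2\,d\tau\le\int_0^t\|\mathbb{H}\|_2\,d\tau+\sum_j\int_0^t|\Gamma_j|\,\|V_j\|_2\,d\tau$.
\item For the interaction picture, using unitary invariance term by term, it bounds $\|V_{\rm I}\|_2\le\|H_{\rm C}^{\rm T}\otimes\mathbb{I}_n\|_2+\|\mathbb{I}_n\otimes H_{\rm C}\|_2+\sum_j|\Gamma_j|\,\|V_j\|_2$.
\item The dissipative parts of the two bounds coincide, and the smallness hypothesis is used only to say the coherent part shrinks: $\|H_{\rm C}\|_2\le\|H_0+H_{\rm C}\|_2$.
\end{itemize}
Your Route 1 estimate $\|\mathbb{H}_{\rm C}+V\|_2\le 2\|H_{\rm C}\|_2+\|V\|_2$ already is the paper's interaction-picture majorant. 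Committing to that reading would finish the argument in the paper's sense.

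If you instead want the literal statement about actual norms, your symmetrization idea does close it, but you must flip $H_{\rm C}$ too. Take the admissible class to be invariant under $(H_{\rm C},f_j)\mapsto(-H_{\rm C},-f_j)$, which preserves both $\|H_{\rm C}\|_2$ and the bound on the $f_j$. Since $X=\mathbb{H}_{\rm C}+V$ is real-linear in $(H_{\rm C},f_j)$, this sends $X\mapsto -X$. Then $2\int_0^t\|X\|_2\,d\tau\le\int_0^t\|\mathbb{H}_0+X\|_2\,d\tau+\int_0^t\|\mathbb{H}_0-X\|_2\,d\tau$ yields (\ref{con:Inequality}) for the suprema over that class, with no smallness assumption at all. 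If $H_{\rm C}$ is held fixed, $\mathbb{H}_{\rm C}$ is exactly the non-symmetrizable piece, and the stated hypothesis cannot absorb it, for the reason above.
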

\begin{proof}
The condition in Eq.~(\ref{con:InterPcondtion}) can be directly generalized and proved by Eq.~(\ref{con:ConvergenceCondition}) and \cite{casas2007sufficient,moan2008convergence}. Considering that in Eq.~(\ref{con:Atcombine})
\begin{equation} \label{con:AWV}
\begin{aligned}
\int_0^t \left\|\mathbb{A}(\tau) \right\|_2 d\tau  &\leq \int_0^t \left\|\mathbb{H} \right\|_2 d\tau + \int_0^t \left\|V(\tau)\right\|_2 d\tau\\
& \leq \int_0^t \left\|\mathbb{H} \right\|_2 d\tau + \sum_{j=1}^{2N} \int_0^t  \left|\Gamma_j(\tau)\right|  \left\| V_j \right\|_2 d\tau,
\end{aligned}
\end{equation}
according to the representation in Eq.~(\ref{con:VtRewrite}). 

For the circumstance in the interaction picture, consider the first component of $V_{\rm I}(t)$,
\begin{equation} \label{con:Provepart1}
\begin{aligned}
\left\| H_{\rm CI}^{\rm T} \otimes \mathbb{I}_n   - \mathbb{I}_n\otimes H_{\rm CI}\right\|_2 \leq \left\| H_{\rm C}^{\rm T} \otimes \mathbb{I}_n \right\|_2  + \left\| \mathbb{I}_n\otimes H_{\rm C}\right\|_2.
\end{aligned}
\end{equation}
For the following components of $V_{\rm I}(t)$,
\begin{equation} \label{con:VItnorm}
\begin{aligned}
& \left\|O_{\rm I}^{*}(t) \otimes L_{\rm I}(t) + L_{\rm I}^{*}(t) \otimes O_{\rm I}(t) -L_{\rm I}^{\rm T}(t) O_{\rm I}^{*} (t)\otimes \mathbb{I}_n -\mathbb{I}_n  \otimes  L_{\rm I}^{\dag} (t)O_{\rm I}(t) \right\|_2\\
\leq& \left\|O_{\rm I}^{*}(t) \otimes L_{\rm I}(t) \right\|_2  + \left\| L_{\rm I}^{*}(t) \otimes O_{\rm I}(t)\right\|_2\\
& + \left\|L_{\rm I}^{\rm T}(t) O_{\rm I}^{*} (t)\otimes \mathbb{I}_n \right\|_2  + \left\| \mathbb{I}_n  \otimes  L_{\rm I}^{\dag} (t)O_{\rm I}(t) \right\|_2.
\end{aligned}
\end{equation}
We take one representative example in Eq.~(\ref{con:VItnorm}) for clarification,  
\begin{equation} 
\begin{aligned}
\left\|O_{\rm I}^{*}(t) \otimes L_{\rm I}(t) \right\|_2  &= \left\|O_{\rm I}^{*}(t) \right\|_2 \left\| L_{\rm I}(t) \right\|_2 \\
& = \left\|O_{\rm I}(t) \right\|_2 \left\| L_{\rm I}(t) \right\|_2\\
& = \left\|U_0^{\dag}(t) \right\|_2 \left\|O \right\|_2 \left\|U_0(t) \right\|_2  \left\|U_0^{\dag}(t) \right\|_2 \left\|L \right\|_2 \left\|U_0(t) \right\|_2 \\
& =  \left\|O \right\|_2   \left\|L \right\|_2.
\end{aligned}
\end{equation}
Combined with other components in Eq.~(\ref{con:VItnorm}) with similar formats and the definition of $V_j $ in Eq.~(\ref{con:V1to4}), we have
\begin{equation} \label{con:Provepart2}
\begin{aligned}
\left\| V_{\rm I}(t) \right\|_2 &\leq \left\| H_{\rm C}^{\rm T} \otimes \mathbb{I}_n \right\|_2  + \left\| \mathbb{I}_n\otimes H_{\rm C}\right\|_2+ \sum_{j=1}^{2N} \left|\Gamma_j(t)\right|  \left\| V_j \right\|_2\\
&\leq \left\| \left( H_0 + H_{\rm C}\right)^{\rm T} \otimes \mathbb{I}_n \right\|_2  + \left\| \mathbb{I}_n\otimes  \left( H_0 + H_{\rm C}\right)\right\|_2+ \sum_{j=1}^{2N} \left|\Gamma_j(t)\right|  \left\| V_j \right\|_2,
\end{aligned}
\end{equation}
due to the condition that $\left\| H_{\rm C}\right\|_2 \ll\left\| H_0 \right\|_2 $. Based on Eq.~(\ref{con:Provepart1}) and Eq.~(\ref{con:Provepart2}), Eq.~(\ref{con:Inequality}) can be proved.
\end{proof}

Theorem~\ref{convergence} clarifies that when solving the non-Markovian quantum dynamics with Magnus expansions in the interaction picture, the convergence of the numerical method can be improved because the condition in Eq.~(\ref{con:InterPcondtion}) can be more easily satisfied. The conclusion also holds for the integral in $\left[ t_0,t_0+h\right]$.

\section{Stochastic filtering dynamics} \label{Sec:Stochastic} In this section, we study the {\rm{It\^{o}}} stochastic dynamics with non-Markovianity based on Eqs.~(\ref{con:SME},\ref{con:SMEVec})  and the generalized Stratonovich stochastic dynamics.
\subsection{{\rm{It\^{o}}} stochastic dynamics}
We first derive the Magnus expansion for the stochastic format by taking the nonlinear coefficient $\mathbb{B}\left(t, \rho\right) \neq 0$ in Eq.~(\ref{con:SMEVec}). For quantum filtering in $\left[t_0,t_0+h \right]$ with a small time step $h\gg dt$, the evolution of quantum state can be represented as 
\begin{equation} \label{con:SMEVect0}
\begin{aligned}
\vec{\rho}\left( t_0 + dt\right)   =& \vec{\rho}\left(t_0\right) + \mathbb{A}\left(t_0\right) \vec{\rho}\left(t_0\right)dt  +\mathbb{B}\left(t_0, \rho\left(t_0 \right)\right) \vec{\rho}\left(t_0 \right) d W_{t_0},
\end{aligned}
\end{equation} 
where $\mathbb{A}\left(t_0\right)$ and $\mathbb{B}\left(t_0, \rho\left(t_0 \right)\right)$ can be acquired according to Eq.~(\ref{con:matrixAB}).

Given the measurement information at $t_0$, the original nonlinear stochastic dynamics in Eq.~(\ref{con:SMEVec}) is reduced to be linear. Additionally, generalized from Eq.~(\ref{con:rhoct}), we represent the evolution of a quantum system as
\begin{equation} \label{con:rhotPdt}
\begin{aligned}
\vec{\rho}\left(t_0+h\right) = e^{\tilde{\Lambda} (h)} \vec{\rho}\left(t_0\right),
\end{aligned}
\end{equation}
where $e^{\tilde{\Lambda} (h)}$ depends not only on the time step $h$, but also on the initial condition at $t_0$. For arbitrarily $s\in \left[t_0,t_0+h \right]$,
\begin{equation} \label{con:PhisDef}
\begin{aligned}
\Phi_s = e^{\tilde{\Lambda} (s)} = \mathbb{I} + \tilde{\Lambda} (s) + \frac{1}{2!}\left[\tilde{\Lambda} (s) \right]^2 + \frac{1}{3!}\left[\tilde{\Lambda} (s) \right]^3+ \cdots ,
\end{aligned}
\end{equation}
then Eq.~(\ref{con:rhotPdt}) can also be represented as $\vec{\rho}\left(t_0+h\right) = \Phi_h  \vec{\rho}\left(t_0\right)$.

Similar to Eq.~(\ref{con:LambdaDynamics}), the generalization to the stochastic circumstance reads
\begin{equation} \label{con:StochasticLambda}
\begin{aligned}
\tilde{\Lambda} \left(t_0+h\right)  =& \sum_{j=1}^{\infty}\tilde{\Lambda}_j\left(t_0+h\right) ,
\end{aligned}
\end{equation}
where
\begin{equation}\label{con:StochasticLambdaFirstOrder}
\begin{aligned}
\tilde{\Lambda}_1\left(t_0+h\right)
&= \int_{t_0}^{t_0+h} \mathbb{A}\left(\tau\right)d\tau + \mathbb{B}_{t_0} \int_{t_0}^{t_0+h} \xi\left(\tau \right) d\tau,
\end{aligned}
\end{equation}
$\mathbb{A}\left(t\right)$ is independent of the measurement of quantum states and has the same format as that in Eq.~(\ref{con:Omega12}), $\mathbb{B}_{t_0}$ is short for $\mathbb{B}\left(t_0, \rho\left(t_0 \right)\right) $ in Eq.~(\ref{con:SMEVect0}) as
\begin{equation} \label{con:Btilde}
\begin{aligned}
\mathbb{B}_{t_0}= \mathbb{I}_n \otimes M + M^{*} \otimes \mathbb{I}_n  - {\rm Tr}\left[\left(M+M^{\dag}\right)\rho\left(t_0 \right) \right] \mathbb{I}_{n^2} ,
\end{aligned}
\end{equation}
by normalizing the quantum system according to its initial state $\rho\left(t_0\right)$. Additionally, for the last component of Eq.~(\ref{con:StochasticLambdaFirstOrder}),
\begin{equation} \label{con:Wh}
\begin{aligned}
\int_{t_0}^{t_0+h} \xi\left(\tau \right) d\tau = W_{t_0+h} -W_{t_0} \triangleq \Delta W_{t_0+h},
\end{aligned}
\end{equation}
where for the Brownian motion, 
\begin{equation}
\begin{aligned}
&{\rm Var} \left[ W_{t_0+h} -W_{t_0}  \right] \\
= &{\rm Var} \left[ W_{t_0+h} \right] + {\rm Var} \left[W_{t_0}  \right] - 2{\rm Cov} \left[ W_{t_0+h}, W_{t_0}  \right]\\
= &t_0+h  +t_0- 2t_0\\
=&h.
\end{aligned}
\end{equation}
Thus $\Delta W_{t_0+h} \sim \mathcal{N}(0,h)$ follows a normal distribution. Based on this, Eq.~(\ref{con:StochasticLambdaFirstOrder}) can be simplified as
\begin{equation}\label{con:StochasticFirstOrderSimple}
\begin{aligned}
\tilde{\Lambda}_1\left(t_0+h\right) &= \int_{t_0}^{t_0+h} \mathbb{A}\left(\tau\right)d\tau + \mathbb{B}_{t_0}  \Delta W_{t_0+h}.
\end{aligned}
\end{equation}

According to Eq.~(\ref{con:StochasticFirstOrderSimple}), for $t_0 < s < t_0+h$, we denote 
\begin{equation}\label{con:Delta1s}
\begin{aligned}
\tilde{\Lambda}_1(s) &= \int_{t_0}^{s} \mathbb{A}\left(\tau\right)d\tau + \mathbb{B}_{t_0}  \Delta W_s,
\end{aligned}
\end{equation}
as a stochastic process indexed by $s$, and its increment
\begin{equation} \label{con:dLambda1}
\begin{aligned}
d\left[\tilde{\Lambda}_1(s)\right] &=  \mathbb{A}\left(s\right)ds +\mathbb{B}_{t_0} dW_s.
\end{aligned}
\end{equation}

For the above stochastic differential increment, consider the differential of $\Phi_s$ in Eq.~(\ref{con:PhisDef}) with stochasticity, 
\begin{equation}\label{con:dPhis}
\begin{aligned}
d\Phi_s&  =\left[ e^{\tilde{\Lambda} (s) + d\left[ \tilde{\Lambda} (s)\right]} -e^{\tilde{\Lambda} (s)} \right]\\ 
&= D\left[\Phi_s \right] d\left[\tilde{\Lambda} (s) \right]+ \frac{1}{2}D^2\left[\Phi_s \right] \left(\tilde{\Lambda} (s),\tilde{\Lambda} (s) \right) + \mathcal{O}\left[\tilde{\Lambda} (s)^2 \right],
\end{aligned}
\end{equation}
where the first and second components in the second line of Eq.~(\ref{con:dPhis}) briefly represent the first-order and second-order Fr\'{e}chet derivatives, respectively, and the last term represents the higher-order infinitesimal. We further clarify the Fr\'{e}chet derivatives as follows.

\subsection{Stochastic derivatives and Magnus expansions}
Similar to Eq.~(\ref{con:dLambda1}),
\begin{equation}
\begin{aligned}
 d\left[ \tilde{\Lambda} (s)\right] = \sum_{j=1}^{\infty}d\left[\tilde{\Lambda}_j(s)\right].
\end{aligned}
\end{equation}
Mathematically, 
\begin{equation}
\begin{aligned}
e^{\tilde{\Lambda} (s) + d\left[ \tilde{\Lambda} (s)\right]} = \mathbb{I}_{n^2} + \sum_{j =1}^{\infty} \left\{\tilde{\Lambda} (s) + d\left[ \tilde{\Lambda} (s)\right]\right\}^j,
\end{aligned}
\end{equation}
and
\begin{equation}
\begin{aligned}
e^{\tilde{\Lambda} (s) } = \mathbb{I}_{n^2} + \sum_{j =1}^{\infty} \left[\tilde{\Lambda} (s) \right]^j,
\end{aligned}
\end{equation}
then in Eq.~(\ref{con:dPhis}),
\begin{equation}\label{con:dPhisV2}
\begin{aligned}
&d\Phi_s =\sum_{j =1}^{\infty} \left\{\tilde{\Lambda} (s) + d\left[ \tilde{\Lambda} (s)\right]\right\}^j - \sum_{j =1}^{\infty} \left[\tilde{\Lambda} (s) \right]^j\\
&= d\left[ \tilde{\Lambda} (s)\right] +\frac{1}{2}\left\{ \tilde{\Lambda} (s) d\left[ \tilde{\Lambda} (s)\right] + d\left[ \tilde{\Lambda} (s)\right]\tilde{\Lambda} (s) + \left\{d\left[ \tilde{\Lambda} (s)\right]\right\}^2 \right\}+ \cdots\\
&= d\left[ \tilde{\Lambda} (s)\right] +\frac{1}{2}\left[ \tilde{\Lambda} (s) ,d\left[ \tilde{\Lambda} (s)\right]\right]+ \frac{1}{2}\left\{d\left[ \tilde{\Lambda} (s)\right]\right\}^2+ d\left[ \tilde{\Lambda} (s)\right] \tilde{\Lambda} (s)+ \cdots.
\end{aligned}
\end{equation}
Considering that
\begin{equation}
\begin{aligned}
&\Phi_s^{-1}  = e^{-\tilde{\Lambda} (s)} = \mathbb{I} - \tilde{\Lambda} (s)+ \cdots,
\end{aligned}
\end{equation}
then
\begin{equation}
\begin{aligned}
d\Phi_s \Phi_s^{-1} &= d\left[ \tilde{\Lambda} (s)\right] +\frac{1}{2}\left[ \tilde{\Lambda} (s) ,d\left[ \tilde{\Lambda} (s)\right]\right]+ d\left[ \tilde{\Lambda} (s)\right] \tilde{\Lambda} (s)\\
&+ \frac{1}{2}\left\{d\left[ \tilde{\Lambda} (s)\right]\right\}^2 -   d\left[ \tilde{\Lambda} (s)\right] \tilde{\Lambda} (s) - \frac{1}{2}\left[ \tilde{\Lambda} (s) ,d\left[ \tilde{\Lambda} (s)\right]\right]  \tilde{\Lambda} (s)  \\
&-   d\left[ \tilde{\Lambda} (s)\right] \tilde{\Lambda} (s)  \tilde{\Lambda} (s) - \frac{1}{2}\left\{d\left[ \tilde{\Lambda} (s)\right]\right\}^2  \tilde{\Lambda} (s) + \cdots\\
&\approx d\left[ \tilde{\Lambda} (s)\right] +\frac{1}{2}\left[ \tilde{\Lambda} (s) ,d\left[ \tilde{\Lambda} (s)\right]\right]+ \frac{1}{2}\left\{d\left[ \tilde{\Lambda} (s)\right]\right\}^2. 
\end{aligned}
\end{equation}
This agrees with the first- and second-order  Fr\'{e}chet derivatives in Eq.~(\ref{con:dPhis}), namely
\begin{equation}
\begin{aligned}
d\Phi_s  =&  \left\{ d\left[ \tilde{\Lambda} (s)\right] +\frac{1}{2}\left[ \tilde{\Lambda} (s) ,d\left[ \tilde{\Lambda} (s)\right]\right]+\frac{1}{6}\left[ \tilde{\Lambda} (s) , \left[ \tilde{\Lambda} (s) ,d\left[ \tilde{\Lambda} (s)\right]\right] \right]\right. \\
&\left. +\cdots+\frac{1}{2}\left\{d\left[ \tilde{\Lambda} (s)\right]\right\}^2  + \frac{1}{6} \left[  \tilde{\Lambda} (s) ,\left\{d\left[ \tilde{\Lambda} (s)\right]\right\}^2 \right]\right.\\
&\left. + \frac{1}{12}\left[d\left[ \tilde{\Lambda} (s)\right],\left[ \tilde{\Lambda} (s) ,d\left[ \tilde{\Lambda} (s)\right] \right]\right] + \dots\right\}\Phi_s. 
\end{aligned}
\end{equation}
Based on Eq.~(\ref{con:Delta1s}), we have

\begin{subequations}  
\begin{numcases}{}
d\tilde{\Lambda}_2  =\frac{1}{2}\left[ \tilde{\Lambda}_1 (s) ,d\left[ \tilde{\Lambda}_1 (s)\right]\right] +\frac{1}{2}\left\{d\left[ \tilde{\Lambda}_1 (s)\right]\right\}^2  , \label{DL2}\\
d\tilde{\Lambda}_3 =\frac{1}{2}\left[ \tilde{\Lambda}_1 (s) ,d\left[ \tilde{\Lambda}_2 (s)\right]\right] + \frac{1}{2}\left[ \tilde{\Lambda}_2 (s) ,d\left[ \tilde{\Lambda}_1 (s)\right]\right] \notag\\
+ \frac{1}{6}\left[ \tilde{\Lambda}_1 (s),  \left[ \tilde{\Lambda} (s),d\left[ \tilde{\Lambda}_1 (s)\right]\right] \right]+ \frac{1}{6} \left[  \tilde{\Lambda}_1 (s) ,\left\{d\left[ \tilde{\Lambda}_1 (s)\right]\right\}^2 \right] \notag\\
+\frac{1}{12}\left[d\left[ \tilde{\Lambda}_1 (s)\right],\left[ \tilde{\Lambda}_1 (s) ,d\left[ \tilde{\Lambda}_1 (s)\right] \right]\right]. \label{DL3}
\end{numcases}
\end{subequations}
Similarly to the circumstance without measurements, the truncation error for the $j$th order Magnus expansion can be evaluated by $\left\| \tilde{\Lambda}_{j+1}(t) \right\|$. 
\begin{theorem} \label{Itoerror}
For the {It\^{o}} stochastic dynamics in Eq.~(\ref{con:SMEVect0}) with measurement at $t_0$, the truncation error for the averaged first-order Magnus expansion at time $t$ around $t_0$ is determined by the leading term $\mathbb{B}_{t_0}^2dt/2$.
\end{theorem}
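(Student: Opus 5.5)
The plan is to compute $d\tilde{\Lambda}_2$ from (\ref{DL2}) using the It\^{o} increment (\ref{con:dLambda1}), take expectations, and identify the term that survives at the lowest order in $dt$. The truncation error of the first-order expansion is measured by $\|\tilde{\Lambda}_2\|$, as stated just before the theorem. So it suffices to show that, after averaging over the Wiener process, the increment of $\tilde{\Lambda}_2$ on a step of length $dt$ at $t_0$ has $\mathbb{B}_{t_0}^2dt/2$ as its leading contribution.

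The first step is to substitute $d[\tilde{\Lambda}_1(s)]=\mathbb{A}(s)ds+\mathbb{B}_{t_0}dW_s$ into the quadratic term $\frac12\{d[\tilde{\Lambda}_1(s)]\}^2$. Expanding the square gives three kinds of terms:
\begin{equation*}
\mathbb{A}^2 ds^2,\qquad (\mathbb{A}\mathbb{B}_{t_0}+\mathbb{B}_{t_0}\mathbb{A})\,ds\,dW_s,\qquad \mathbb{B}_{t_0}^2\,dW_s^2 .
\end{equation*}
By the It\^{o} multiplication table ($ds^2=0$, $ds\,dW_s=0$, $dW_s^2=ds$), only $\frac12\mathbb{B}_{t_0}^2ds$ survives. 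This term is deterministic and of first order in $ds$. That is exactly the feature distinguishing the It\^{o} case from the deterministic Magnus expansion of Section~\ref{Sec:Model}, where the quadratic term is absent.

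The second step is to handle the commutator term $\frac12[\tilde{\Lambda}_1(s),d\tilde{\Lambda}_1(s)]$. Using (\ref{con:Delta1s}) at $s$ near $t_0$, $\tilde{\Lambda}_1(s)=\int_{t_0}^s\mathbb{A}\,d\tau+\mathbb{B}_{t_0}\Delta W_s$ is of size $\mathcal{O}(s-t_0)+\mathcal{O}(\sqrt{s-t_0})$. The commutator then contains three pieces:
\begin{itemize}
\item $[\int\mathbb{A},\mathbb{A}]ds$, which is $\mathcal{O}((s-t_0)ds)$, as in (\ref{con:ErrorCommutatorMiddle});
\item $[\int\mathbb{A},\mathbb{B}_{t_0}]dW_s$ and $\Delta W_s[\mathbb{B}_{t_0},\mathbb{A}]ds$, which are of mixed order;
\item $[\mathbb{B}_{t_0},\mathbb{B}_{t_0}]\Delta W_s dW_s=0$.
\end{itemize}
Taking expectations, the terms containing $dW_s$ vanish because the It\^{o} integrand is adapted and the increment has zero mean. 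The term $\Delta W_s\,ds$ also averages to zero. What remains is of order $(s-t_0)ds$, which is higher order than $ds$ for a step $dt$ starting at $t_0$. Hence $E[d\tilde{\Lambda}_2]=\frac12\mathbb{B}_{t_0}^2dt+\mathcal{O}(dt^2)$. The same reasoning, applied with $s=t_0$ and $\tilde{\Lambda}_1(t_0)=0$, shows that at $t_0$ the commutator contributes nothing at all.

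The main obstacle is making the order counting rigorous. One must justify that the averaged contributions of $\tilde{\Lambda}_3$ and higher, built from (\ref{DL3}), are $o(dt)$, and that the mixed stochastic terms really have zero mean. For the mixed terms I would rely on the fact that $\mathbb{B}_{t_0}$ is frozen at $t_0$ (deterministic given $\mathcal{F}_{t_0}$), so the products are genuine It\^{o} integrals with adapted integrands. For (\ref{DL3}), I would bound each term by powers of $\|\mathbb{A}\|$ and $\|\mathbb{B}_{t_0}\|$ times moments of $\Delta W$. Each such term carries at least $dt^{3/2}$ in expectation, or vanishes by oddness in $\Delta W$. This leaves $\mathbb{B}_{t_0}^2dt/2$ as the leading term of the averaged truncation error.
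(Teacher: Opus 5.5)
Your proposal is correct and follows the paper's route: substitute $d\tilde{\Lambda}_1=\mathbb{A}\,dt+\mathbb{B}_{t_0}dW_t$ into the quadratic term of (\ref{DL2}), keep $\frac12\mathbb{B}_{t_0}^2dt$, drop the $dt^2$ piece, and let the mixed $\frac12\{\mathbb{A},\mathbb{B}_{t_0}\}\,dt\,dW_t$ term disappear (the paper does this by averaging, you do it directly via the It\^{o} multiplication table). Your explicit handling of the commutator term in (\ref{DL2}) and of the contributions from (\ref{DL3}) is more careful than the paper's proof, which treats the commutator only informally after the theorem through (\ref{con:Itoerror1order}).
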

\begin{proof}
For the last term in Eq.~(\ref{DL2}), according to Eq.~(\ref{con:dLambda1}), we have
\begin{equation}
\begin{aligned}
\frac{1}{2}\left\{d\left[ \tilde{\Lambda}_1 (t)\right]\right\}^2& = \frac{1}{2} \left[ \mathbb{A}\left(t\right)dt + \mathbb{B}_{t_0} dW_t \right]^2\\
& \approx  \frac{1}{2}\mathbb{B}_{t_0}^2dt+ \frac{1}{2} \left( \mathbb{A}\left(t\right)\mathbb{B}_{t_0} + \mathbb{B}_{t_0}\mathbb{A}\left(t\right) \right) dtdW_t\\
& \approx  \frac{1}{2}\mathbb{B}_{t_0}^2dt+ \frac{1}{2} \left\{ \mathbb{A}\left(t\right),\mathbb{B}_{t_0} \right\}dtdW_t,
\end{aligned}
\end{equation}
where $\left\{ \mathbb{A},\mathbb{B} \right\} =  \mathbb{A}\mathbb{B} + \mathbb{B}\mathbb{A}$ represents the anti-commutator, and the higher-order terms proportional to $\left(dt \right)^2$ are omitted. After averaging the stochastic term, the truncation error is determined by the leading term $\mathbb{B}_{t_0}^2dt/2$.
\end{proof}

\begin{remark}
Theorem~\ref{Itoerror} illustrates that the first-order Magnus expansion for the {\rm{It\^{o}}} stochastic Liouville equation can induce a truncation error proportional to $dt$, and the amplitude of this error component is determined by the measurement operator and the quantum state at the time of measurement. 
\end{remark}

Additionally, for the first component of Eq.~(\ref{DL2}), 
\begin{equation} \label{con:Itoerror1order}
\begin{aligned}
&\left[ \tilde{\Lambda}_1 (s) ,d\left[ \tilde{\Lambda}_1 (s)\right]\right] = \left[\int_{t_0}^{s} \mathbb{A}\left(\tau\right)d\tau +\mathbb{B}_{t_0} \Delta W_s,  \mathbb{A}\left(s\right)ds + \mathbb{B}_{t_0}  dW_s \right]  \\
= &  \int_{t_0}^{s} \left[ \mathbb{A}\left(\tau\right), \mathbb{A}\left(s\right) \right]d\tau ds + \int_{t_0}^{s} \left[ \mathbb{A}\left(\tau\right), \mathbb{B}_{t_0}   \right]d\tau dW_s +\left[ \mathbb{B}_{t_0} ,  \mathbb{A}\left(s\right)\right] \Delta W_s ds.
\end{aligned}
\end{equation}
According to Eq.~(\ref{con:Delta1s}), the deterministic term and incremental stochastic terms in the second line of Eq.~(\ref{con:Itoerror1order}) are higher-order infinitesimals than $dt$ when $s-t_0$ is of the same order as $dt$. 

For convenience, based on Eqs.~(\ref{con:Delta1s},\ref{con:Itoerror1order}), we rewrite $\tilde{\Lambda}_1(s)$ and $\tilde{\Lambda}_2(s)$ as
\begin{subequations}  \label{con:Delta12sRewrite}
\begin{numcases}{}
\tilde{\Lambda}_1(s)  =\tilde{\Lambda}_1^{(1)}(s) + \tilde{\Lambda}_1^{(2)}(s), \label{DL1RW}\\
\tilde{\Lambda}_2(s)  =\tilde{\Lambda}_2^{(1)}(s) + \tilde{\Lambda}_2^{(2)}(s) + \tilde{\Lambda}_2^{(3)}(s), \label{DL2RW}
\end{numcases}
\end{subequations}
where
\begin{subequations}  \label{con:Delta1sRewriteMeaning}
\begin{numcases}{}
\tilde{\Lambda}_1^{(1)}(s)  =\int_{t_0}^{s} \mathbb{A}\left(\tau\right)d\tau, \label{DL1RWMeaning1}\\
\tilde{\Lambda}_1^{(2)}(s)  =\mathbb{B}_{t_0} \Delta W_s, \label{DL1RWMeaning2}
\end{numcases}
\end{subequations}
and
\begin{subequations}  \label{con:Delta2sRewriteMeaning}
\begin{numcases}{}
\tilde{\Lambda}_2^{(1)}(s)  =\frac{1}{2}\int_{t_0}^{s} \left[\int_{t_0}^{s_1} \mathbb{A}\left(\tau\right)d\tau, \mathbb{B}_{t_0}   \right] dW_{s_1},\label{DL2RWMeaning1}\\
\tilde{\Lambda}_2^{(2)}(s)  =\frac{1}{2}\mathbb{B}_{t_0}^2\left(s-t_0\right) + \frac{1}{2}\int_{t_0}^{s} \left[ \int_{t_0}^{s_1}  \mathbb{A}\left(\tau\right)d\tau, \mathbb{A}\left(s_1\right) \right] ds_1 , \label{DL2RWMeaning2}\\
\tilde{\Lambda}_2^{(3)}(s)  =\frac{1}{2}\int_{t_0}^{s} \left\{ \mathbb{A}\left(s_1\right),\mathbb{B}_{t_0}\right\} dW_{s_1}ds_1 \notag \\
~~~~~~~~~~~~~+\frac{1}{2}\int_{t_0}^{s}\left[ \mathbb{B}_{t_0},  \mathbb{A}\left(s_1\right)\right] \Delta W_{s_1} ds_1. \label{DL2RWMeaning3}
\end{numcases} 
\end{subequations}

Based on this, we study the truncation errors for the stochastic evolution in $\left[t_0,t_0+h \right]$ in the following by taking $s =t_0+h$.

\subsection{Truncation error analysis for stochastic filtering dynamics}
According to Eq.~(\ref{con:PhisDef}), 
\begin{equation}
\begin{aligned}
\Phi_{t_0+h} & = \mathbb{I} + \tilde{\Lambda} \left(t_0+h \right) + \frac{\left[\tilde{\Lambda} \left(t_0+h \right) \right]^2}{2!} + \frac{\left[\tilde{\Lambda} \left(t_0+h \right) \right]^3}{3!}+ \cdots.
\end{aligned}
\end{equation}
Generalized from Eq.~(\ref{con:errordef}) for quantum dynamics without stochasticity, the truncation error of the $p$th order Magnus expansion for stochastic dynamics in $\left[t_0,t_0+h \right]$ can be evaluated as~\cite{krylov1980controlled}
\begin{equation} \label{con:Stochasticerrordef}
\begin{aligned}
\tilde{e}_p \left(h\right) &= \sqrt{\mathbb{E} \left\|\tilde{\Lambda} \left(t_0+h\right) -\sum_{j=1}^{p} \tilde{\Lambda}_j\left(t_0+h\right) \right\|^2},
\end{aligned}
\end{equation}
where $\mathbb{E}$ represents the mathematical expectation of a stochastic process. According to the Minkowski inequality,
\begin{equation} \label{con:MinkowskiInequal}
\begin{aligned} 
 \tilde{e}_p \left(h\right) \leq \sqrt{\sum_{j = p+1}^{\infty}\mathbb{E} \left\| \tilde{\Lambda}_j\left(t_0+h\right) \right\|^2} \leq C \sqrt{\mathbb{E} \left\| \tilde{\Lambda}_{p+1}\left(t_0+h\right) \right\|^2},
  \end{aligned}
\end{equation}
where $C$ is a positive constant.
\begin{theorem} \label{ItoStoMag}
The truncation error of the first-order Magnus expansion for the quantum  {It\^{o}} stochastic dynamics satisfies $\tilde{e}_1 \left(h\right) \leq C h $. 
\end{theorem}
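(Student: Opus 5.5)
By the Minkowski-type estimate (\ref{con:MinkowskiInequal}) with $p=1$, it suffices to show that $\sqrt{\mathbb{E}\|\tilde{\Lambda}_2(t_0+h)\|^2}\leq C h$ for small $h$, with $C$ absorbing the constant in (\ref{con:MinkowskiInequal}). I will use the splitting (\ref{DL2RW}) into $\tilde{\Lambda}_2^{(1)}$, $\tilde{\Lambda}_2^{(2)}$ and $\tilde{\Lambda}_2^{(3)}$ evaluated at $s=t_0+h$. Applying Minkowski's inequality once more gives
\[
\sqrt{\mathbb{E}\|\tilde{\Lambda}_2\|^2}\leq \sum_{k=1}^3\sqrt{\mathbb{E}\|\tilde{\Lambda}_2^{(k)}\|^2},
\]
so each piece can be bounded separately. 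Throughout I use that $\|\mathbb{A}(t)\|$ is uniformly bounded by (\ref{con:Atnorm}) together with the Assumption that the $f_j$ are bounded; call this bound $K_A$. I also use that $\|\mathbb{B}_{t_0}\|\leq K_B$, because $M$ is fixed and $|{\rm Tr}[(M+M^\dag)\rho(t_0)]|\leq \|M+M^\dag\|_2$ for a density matrix.

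\textbf{Deterministic piece (the one that sets the order).} $\tilde{\Lambda}_2^{(2)}$ contains $\tfrac12\mathbb{B}_{t_0}^2 h$. This is exactly the It\^{o} correction identified in Theorem~\ref{Itoerror}, and its norm is at most $\tfrac12K_B^2h$. The remaining double integral is handled as in (\ref{con:UpperboundO2}) and is bounded by $\tfrac{h^2}{2}K_A^2=\mathcal{O}(h^2)$. Hence $\sqrt{\mathbb{E}\|\tilde{\Lambda}_2^{(2)}\|^2}\leq \tfrac12K_B^2h+K_A^2h^2$. This term is $\Theta(h)$ whenever $\mathbb{B}_{t_0}^2\neq 0$, which explains why the bound is linear rather than of higher order.

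\textbf{Stochastic pieces.} For $\tilde{\Lambda}_2^{(1)}$ I apply the It\^{o} isometry, entrywise or in the Frobenius norm:
\[
\mathbb{E}\|\tilde{\Lambda}_2^{(1)}\|^2=\tfrac14\int_{t_0}^{t_0+h}\mathbb{E}\Big\|\Big[\int_{t_0}^{s_1}\mathbb{A}\,d\tau,\mathbb{B}_{t_0}\Big]\Big\|^2ds_1\leq \tfrac14\int_{t_0}^{t_0+h}\big(2K_AK_B(s_1-t_0)\big)^2ds_1=\mathcal{O}(h^3).
\]
This piece therefore contributes $\mathcal{O}(h^{3/2})$. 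In $\tilde{\Lambda}_2^{(3)}$, the term carrying $dW_{s_1}ds_1$ is of higher order and can be dropped, consistent with the convention $dt\,dW=0$ used in the proof of Theorem~\ref{Itoerror}. The term with $\Delta W_{s_1}ds_1$ is bounded by Cauchy--Schwarz in time:
\[
\mathbb{E}\Big\|\int_{t_0}^{t_0+h}[\mathbb{B}_{t_0},\mathbb{A}(s_1)]\Delta W_{s_1}ds_1\Big\|^2\leq h\int_{t_0}^{t_0+h}4K_A^2K_B^2\,\mathbb{E}|\Delta W_{s_1}|^2ds_1=\mathcal{O}(h^3),
\]
so it also contributes $\mathcal{O}(h^{3/2})$. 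Summing the three pieces, for $h\leq 1$ all contributions are bounded by a constant times $h$, and therefore $\tilde{e}_1(h)\leq Ch$.

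\textbf{Main obstacle.} The delicate point is justifying the use of (\ref{con:MinkowskiInequal}), that is, controlling the tail $\sum_{j\geq3}\mathbb{E}\|\tilde{\Lambda}_j\|^2$ by a constant multiple of $\mathbb{E}\|\tilde{\Lambda}_2\|^2$. Since the paper takes that inequality as given, I will invoke it directly. As a sanity check, I would note from (\ref{DL3}) that each $\tilde{\Lambda}_j$ with $j\geq3$ is an iterated integral with at least $j$ factors of size $\mathcal{O}(h^{1/2})$ in mean square. Its root-mean-square norm is therefore $\mathcal{O}(h^{3/2})$, which does not spoil the $\mathcal{O}(h)$ bound.
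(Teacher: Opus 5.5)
Your proposal is correct and follows the paper's own argument. You reduce to $\tilde{\Lambda}_2(t_0+h)$ via (\ref{con:MinkowskiInequal}) and split it as in (\ref{DL2RW}). The It\^{o} correction $\tfrac12\mathbb{B}_{t_0}^2h$ in $\tilde{\Lambda}_2^{(2)}$ is the only $\mathcal{O}(h)$ contribution, while $\tilde{\Lambda}_2^{(1)}$ and $\tilde{\Lambda}_2^{(3)}$ are $\mathcal{O}(h^{3/2})$. Your It\^{o}-isometry and Cauchy--Schwarz estimates simply make explicit the order claims that the paper asserts without computation.
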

\begin{proof}
For the three components in Eq.~(\ref{con:Delta2sRewriteMeaning}), $\tilde{\Lambda}_2^{(1)}(s)  $ and $\tilde{\Lambda}_2^{(3)}(s) $  are  of the order $ \left(s-t_0\right)^{3/2}$, $\tilde{\Lambda}_2^{(2)}(s) $ is of the leading order $\left(s-t_0\right)$.  The truncation error is dominated by the first component on the right-hand side of Eq.~(\ref{DL2RWMeaning2}).  Taking $s = t_0+h$, we can derive $\tilde{e}_1 \left(h\right) \leq C h $ according to Eq.~(\ref{con:MinkowskiInequal}), and the other terms of higher orders are omitted.
\end{proof}

\begin{remark}
Theorem~\ref{ItoStoMag} shows that the Magnus expansion based on quantum  {It\^{o}} stochastic dynamics can induce the term $\mathbb{B}_{t_0}^2\left(s-t_0\right)$ in Eq.~(\ref{DL2RWMeaning2}), rendering a larger truncation error proportional to $h$ compared to other terms  proportional to $h^{3/2}$. To eliminate the leading term due to quantum measurements, the second-order Magnus expansion is required.  
\end{remark}

\subsection{Stratonovich stochastic dynamics}
Generalized from the {\rm{It\^{o}}} stochastic master equation (\ref{con:SME}), the Stratonovich SME reads~\cite{moon2014interpretation}
\begin{equation} \label{con:StratonovichSME}
\begin{aligned}
d\rho_S   =&-i \left[H,\rho_S\right]dt  + \mathcal{L}_{O}\left[\rho_S\right]dt + \mathcal{H}[M]\rho_S\circ d W_t,
\end{aligned}
\end{equation}
where $\circ$ represents the Stratonovich product in stochastic equations. Similar to the {\rm{It\^{o}}} stochastic dynamics of $\vec{\rho}$ in Eq.~(\ref{con:SMEVec}), the Stratonovich dynamics in Eq.~(\ref{con:StratonovichSME}) can be vectorized as
\begin{equation} \label{con:SMEVecStratonovich}
\begin{aligned}
d\vec{\rho}_S   =&\mathbb{A}_S(t) \vec{\rho}_S dt  +\mathbb{B}_S\left(t, \rho_S\right)  \vec{\rho}_S \circ d W_t,
\end{aligned}
\end{equation}
where
\begin{subequations}  
\begin{numcases}{}
\mathbb{A}_S(t)  = \mathbb{A}\left(t\right)- \frac{1}{2}\mathbb{B}_{t_0}^2, \\
\mathbb{B}_S\left(t, \rho_S\right)  =\mathbb{B}\left(t_0, \rho_S\right),
\end{numcases} 
\end{subequations}
compared to Eq.~(\ref{con:SMEVect0}) for the small time step $\left[t_0,t_0+h \right]$.

Different from the {\rm{It\^{o}}}  stochastic dynamics in the subsection above, the Magnus expansion for Stratonovich stochastic dynamics can be derived similarly as Eq.~(\ref{con:LambdaDynamics}). This is due to the advantage of Stratonovich modeling in the Magnus expansion, and this has been systematically introduced in \cite{wang2020magnus}. Compared with the {\rm{It\^{o}}} stochastic dynamics, the advantage of the Stratonovich dynamics lies in the improvement of accuracy in the first-order Magnus expansion as below.  
\begin{theorem} \label{StrotovichStoMag}
The first-order Magnus expansion for the Stratonovich stochastic dynamics in Eq.~(\ref{con:SMEVecStratonovich}) satisfies $\tilde{e}_1 \left(h\right) \leq C h^{3/2} $. 
\end{theorem}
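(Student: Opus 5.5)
The plan is to redo the Itô computation of Theorem~\ref{ItoStoMag} with Stratonovich calculus, and to show that the order-$h$ term $\frac{1}{2}\mathbb{B}_{t_0}^2(s-t_0)$ in Eq.~(\ref{DL2RWMeaning2}) disappears. The Stratonovich differential obeys the ordinary chain rule. Hence the differential of $\Phi_s=e^{\tilde{\Lambda}(s)}$ has no second-order Fréchet term $\frac{1}{2}\{d[\tilde{\Lambda}(s)]\}^2$. The generator therefore satisfies the deterministic recursion of Eq.~(\ref{con:LambdaDeffi}), with $\mathbb{A}(\tau)d\tau$ replaced by the Stratonovich increment $\mathbb{A}_S(\tau)d\tau+\mathbb{B}_{t_0}\circ dW_\tau$, as in \cite{wang2020magnus}. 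Concretely, I will take
\begin{equation*}
\tilde{\Lambda}_1(s)=\int_{t_0}^{s}\mathbb{A}_S(\tau)d\tau+\mathbb{B}_{t_0}\Delta W_s,
\end{equation*}
and $\tilde{\Lambda}_2(s)=\frac{1}{2}\int_{t_0}^{s}[\tilde{\Lambda}_1(s_1),\circ\, d\tilde{\Lambda}_1(s_1)]$, which has the same form as $\Lambda_2$ in Eq.~(\ref{con:Omega12}).

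Expanding this commutator bilinearly gives three types of terms.
\begin{itemize}
\item The deterministic double integral $\frac{1}{2}\int\int[\mathbb{A}_S(\tau),\mathbb{A}_S(s_1)]d\tau ds_1$. By the estimate (\ref{con:UpperboundO2}) and the bounded amplitudes $f_j$, this is $\mathcal{O}(h^2)$.
\item The mixed terms $\frac{1}{2}\int_{t_0}^{s}[\int_{t_0}^{s_1}\mathbb{A}_S d\tau,\mathbb{B}_{t_0}]\circ dW_{s_1}$ and $\frac{1}{2}\int_{t_0}^{s}[\mathbb{B}_{t_0},\mathbb{A}_S(s_1)]\Delta W_{s_1}ds_1$. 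These match $\tilde{\Lambda}_2^{(1)}$ and the commutator part of $\tilde{\Lambda}_2^{(3)}$. The Itô isometry gives a second moment of order $\int (s_1-t_0)^2ds_1\sim h^3$ for the first. Cauchy--Schwarz with $\mathbb{E}|\Delta W_{s_1}|^2=s_1-t_0$ gives the same order for the second. Both are therefore $\mathcal{O}(h^{3/2})$ in $L^2$. The Itô--Stratonovich correction of the first is a bracket of $dW$ with a $d\tau$-process, so it vanishes.
\item The pure-noise term $\frac{1}{2}\int[\mathbb{B}_{t_0}\Delta W_{s_1},\mathbb{B}_{t_0}\circ dW_{s_1}]$. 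This vanishes identically because $\mathbb{B}_{t_0}$ is frozen at $t_0$ and commutes with itself.
\end{itemize}
This last point is the key contrast with the Itô case. There, the term $\frac{1}{2}\mathbb{B}_{t_0}^2dt$ from Theorem~\ref{Itoerror} survives. Here it has been absorbed into the drift $\mathbb{A}_S=\mathbb{A}-\frac{1}{2}\mathbb{B}_{t_0}^2$, and the first-order term $\tilde{\Lambda}_1$ already accounts for it exactly.

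Taking $s=t_0+h$, the bounds give $\mathbb{E}\|\tilde{\Lambda}_2(t_0+h)\|^2\leq C h^3$. Inserting this into the Minkowski bound (\ref{con:MinkowskiInequal}) with $p=1$ yields $\tilde{e}_1(h)\leq Ch^{3/2}$. The higher terms $\tilde{\Lambda}_j$, $j\geq3$, are nested commutators of at least three increments. By the same moment scaling, each increment counted as $h^{1/2}$, they are $\mathcal{O}(h^{3/2})$ or smaller and are absorbed into $C$.

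I expect the main obstacle to be justifying that the Stratonovich Magnus recursion has no $\frac{1}{2}(d\tilde{\Lambda})^2$ term. This amounts to showing that the product rule $d(e^{\tilde{\Lambda}})=\mathrm{dexp}_{\tilde{\Lambda}}(\circ\, d\tilde{\Lambda})e^{\tilde{\Lambda}}$ holds for Stratonovich differentials. My approach is to convert the Stratonovich equation (\ref{con:SMEVecStratonovich}) back to Itô form and check that its correction term $+\frac{1}{2}\mathbb{B}_{t_0}^2dt$ cancels the $-\frac{1}{2}\mathbb{B}_{t_0}^2$ in $\mathbb{A}_S$. The Itô second-order term in Eq.~(\ref{DL2}) would then be exactly compensated, reproducing Eq.~(\ref{DL2RWMeaning2}) without its leading summand. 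A secondary difficulty is bounding the infinite tail in (\ref{con:MinkowskiInequal}) uniformly. I will handle it by assuming $h$ small enough, so that the convergence condition of Theorem~\ref{convergence} holds on an event of high probability.
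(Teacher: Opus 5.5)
Your proposal is correct at the paper's level of rigor and follows the same route as the paper. The order-$h$ term $\frac{1}{2}\mathbb{B}_{t_0}^2(s-t_0)$ of Eq.~(\ref{DL2RWMeaning2}) is absorbed into the drift $\mathbb{A}_S=\mathbb{A}-\frac{1}{2}\mathbb{B}_{t_0}^2$, leaving only the mixed drift--noise commutators, which are $\mathcal{O}(h^{3/2})$ in $L^2$; you simply make explicit what the paper's one-line proof asserts (vanishing pure-noise commutator for frozen $\mathbb{B}_{t_0}$, It\^{o} isometry, Cauchy--Schwarz), while the tail $\sum_{j\geq 3}\tilde{\Lambda}_j$ is merely omitted in the paper and your high-probability convergence argument would not by itself control the expectation in Eq.~(\ref{con:Stochasticerrordef}), so that step stays as heuristic as in the original.
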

\begin{proof}
By replacing $\mathbb{A}\left(t_0\right)$ and $\mathbb{B}\left(t_0, \rho\left(t_0 \right)\right)$ in Eq.~(\ref{con:SMEVect0}) for the {\rm{It\^{o}}}  stochastic dynamics with $\mathbb{A}_S\left(t_0\right)$ and $\mathbb{B}_S\left(t_0, \rho_S\right) $ in Eq.~(\ref{con:Delta2sRewriteMeaning}) respectively, the component $(1/2)\mathbb{B}_{t_0}^2\left(s-t_0\right) $  in Eq.~(\ref{DL2RWMeaning2}) will be eliminated. Then the leading order of the truncation error is proportional to  $ \left(s-t_0\right)^{3/2}$. Taking $s = t_0+h$, we can derive $\tilde{e}_1 \left(h\right) \leq C h^{3/2} $, and the other higher-order terms are omitted.
\end{proof}

By combining Theorem~\ref{StrotovichStoMag} and the Magnus expansion based on {\rm{It\^{o}}} stochastic dynamics, the truncation errors between the {\rm{It\^{o}}} and Stratonovich modeling approaches lie mainly in the first-order Magnus expansion. The truncation errors of higher-order cases for both methods can be evaluated by the procedure outlined in \cite{wang2020magnus}.

\section{Numerical simulations} \label{Sec:numerical}
The numerical solutions and errors of the stochastic differential equation (\ref{con:SMEVec}) can be analyzed using Magnus expansion and Lie algebras, as in \cite{marjanovic2018numerical} for matrix differential equations, and in \cite{SIAM2003magnus,barfoot2026vector} for linear time-varying stochastic vector equations. The theoretical results in the former two sections are based on the non-Markovian Liouville equation, which is similar to the approach in \cite{SIAM2003magnus,barfoot2026vector}. In this section, we first take the simplest two-level system as an example, then generalize it to high-dimensional quantum systems.

\subsection{Two-level system}
For a two-level system, the density matrix can be represented as
\begin{equation} \label{con:twoleveldensitym}
\begin{aligned}
\rho = \begin{bmatrix}
\rho_{1} &\rho_{3}\\
\rho_{2} &\rho_{4}
\end{bmatrix},
\end{aligned}
\end{equation}
or in an equivalent vector format as $\vec{\rho}= \left[\rho_{1} ,\rho_{2} ,\rho_{3} ,\rho_{4} \right]^{\rm T}$~\cite{mironowicz2024semi,kunold2024vectorization}. Then both $\mathbb{A}(t)$ and $\mathbb{B}(t)$ are $4\times 4$ matrices, and the elements are determined by Eq.~(\ref{con:matrixAB}). We take  the Hamiltonian as
\begin{equation} \label{con:drhoct}
\begin{aligned}
H = \frac{\omega_0}{2} \sigma_z,
\end{aligned}
\end{equation}
where $\omega_0$ represents the resonant frequency of the atom, and $\sigma_z = \begin{pmatrix} 1 & 0 \\ 0 & -1 \end{pmatrix}$. 

\subsubsection{First-order Magnus expansion}
We take the system's coupling operator to the environment as 

\begin{equation}
\begin{aligned}
L = \sigma_- =  \begin{pmatrix} 0 & 0 \\ 1 & 0 \end{pmatrix},
\end{aligned}
\end{equation}
and $O = \int_0^t\alpha(t,s)ds L$.
\begin{figure}[h]
\centerline{\includegraphics[width=1\columnwidth]{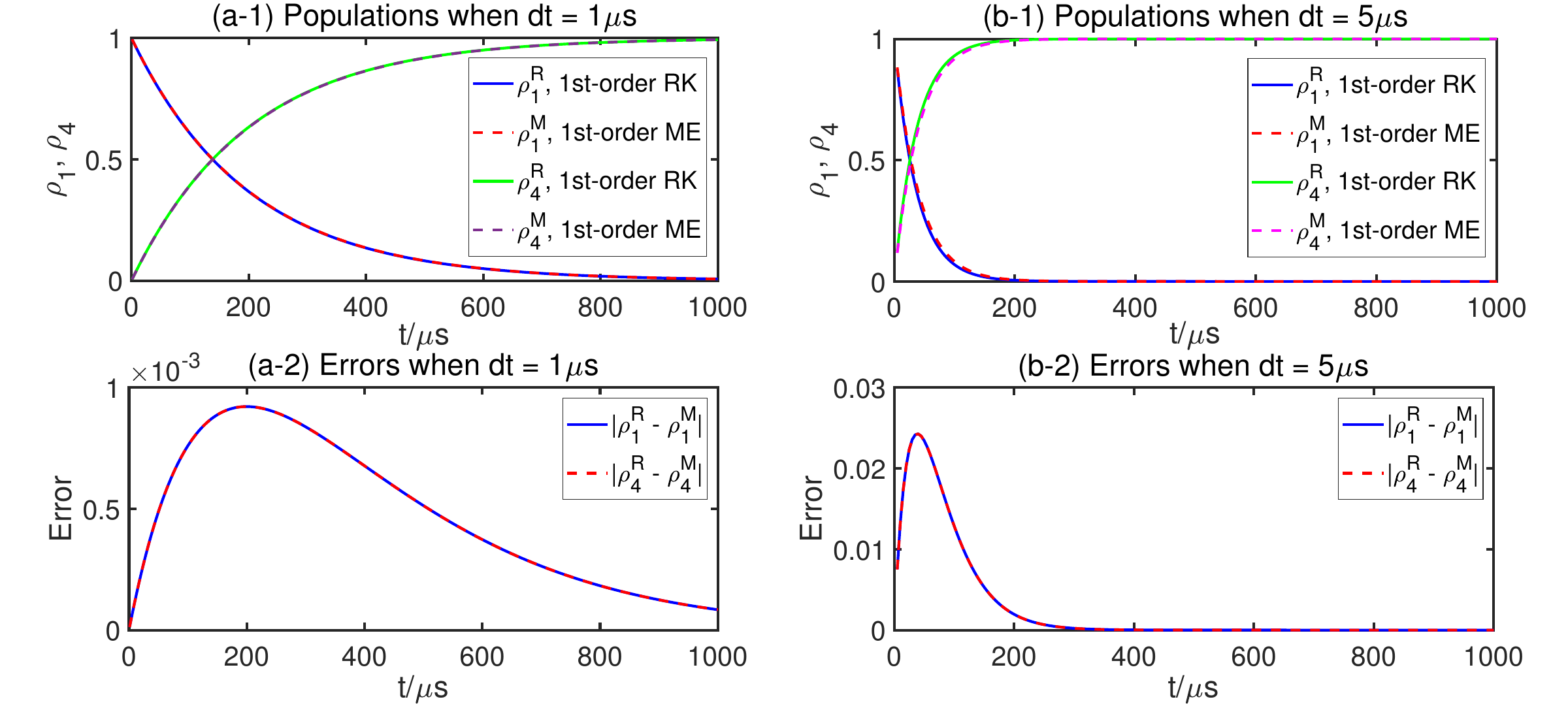}}
\caption{Comparisons between the first-order Runge-Kutta (RK) method and Magnus expansion (ME) on $\rho_{1}$ and $\rho_{4}$.}
	\label{fig:firstorder}
\end{figure}
In this simple setting, according to Eq.~(\ref{con:LambdaDynamics}), the higher-order Lie algebras in the Magnus expansion based on the vectorized Liouville equation all equal zero. In the following simulation, we take $\omega_0 = 50$GHz, $\gamma  = 50$KHz, $\Omega = 40$GHz for the integral kernel in Eq.~(\ref{con:NonMarkovAlpha}), and initially $\vec{\rho}(0)= \left[1 ,0,0 ,0 \right]^{\rm T}$. As simulated in Fig.~\ref{fig:firstorder}, the dynamical process represents the spontaneous emission of an excited two-level quantum system to the non-Markovian environment. The numerical simulation for $\rho_j$ with the Runge-Kutta method is denoted as $\rho_j^{\rm R}$, and that with the first-order Magnus expansion is denoted as $\rho_j^{\rm M}$. Because the Lie algebras in Eq.~(\ref{con:LambdaDynamics}) all equal zero, $\rho_j^{\rm M}$ is the precise solution of $\rho_j$. Then the comparisons in Fig.~\ref{fig:firstorder} show that the simulation error of the Runge-Kutta method increases with increasing $dt$.

\subsubsection{Second-order Magnus expansion} To clarify the numerical simulations with second-order Magnus expansions, we consider the operator as 
\begin{equation}  \label{con:Ltimevarying}
\begin{aligned}
L = \sigma_- + u(t)\sigma_x,
\end{aligned}
\end{equation}
where $\sigma_x = \begin{pmatrix} 0 & 1 \\ 1 & 0 \end{pmatrix}$, $u(t)$ represents the time-varying coupling between the quantum system and the environment through the operator $\sigma_x$, and $O = \int_0^t\alpha(t,s)ds L$. We take one Lindblad component as an example,
\begin{equation} \label{con:CalLind1}
\begin{aligned}
\left[L,\rho O^{\dag} \right] &=\int_0^t\alpha^*(t,s)ds \left[ \sigma_- + u(t)\sigma_x, \rho\left( \sigma_+ + u^*(t)\sigma_x\right) \right]\\
&=\int_0^t\alpha^*(t,s)ds  \left[ \sigma_- , \rho \sigma_+  \right] +u(t)\int_0^t\alpha^*(t,s)ds  \left[ \sigma_x , \rho \sigma_+  \right]\\
&~~~~+u^*(t)\int_0^t\alpha^*(t,s)ds  \left[ \sigma_- , \rho \sigma_x  \right] +\left|u(t)\right|^2\int_0^t\alpha^*(t,s)ds  \left[ \sigma_x , \rho \sigma_x  \right],
\end{aligned}
\end{equation}
where $\sigma_+ = \begin{pmatrix} 0 & 1 \\ 0 & 0 \end{pmatrix}$. For a general complex-valued $u(t)$, after vectorizing to the Liouville space, the corresponding items of Eq.~(\ref{con:CalLind1}) and its Hermite conjugation can be regarded as a special case of Eq.~(\ref{con:VtRewrite}) with $N=4$.

In the simulations in Fig.~\ref{fig:secondorder}, we take $u(t) = \sin(10t)$, (a-1)-(a-3) are for numerical simulations when $dt = 1\mu$s, (b-1)-(b-3) are for numerical simulations when $dt = 5\mu$s, and the other parameters are the same as those in Fig.~\ref{fig:firstorder}. Generalized from Fig.~\ref{fig:firstorder}, $\rho_j^{\rm M1}$ and $\rho_j^{\rm M2}$ in Fig.~\ref{fig:secondorder} represent the numerical simulations of $\rho_j$ via the first-order Magnus expansion and the second-order  Magnus expansion, respectively. 
\begin{figure}[h]
\centerline{\includegraphics[width=1\columnwidth]{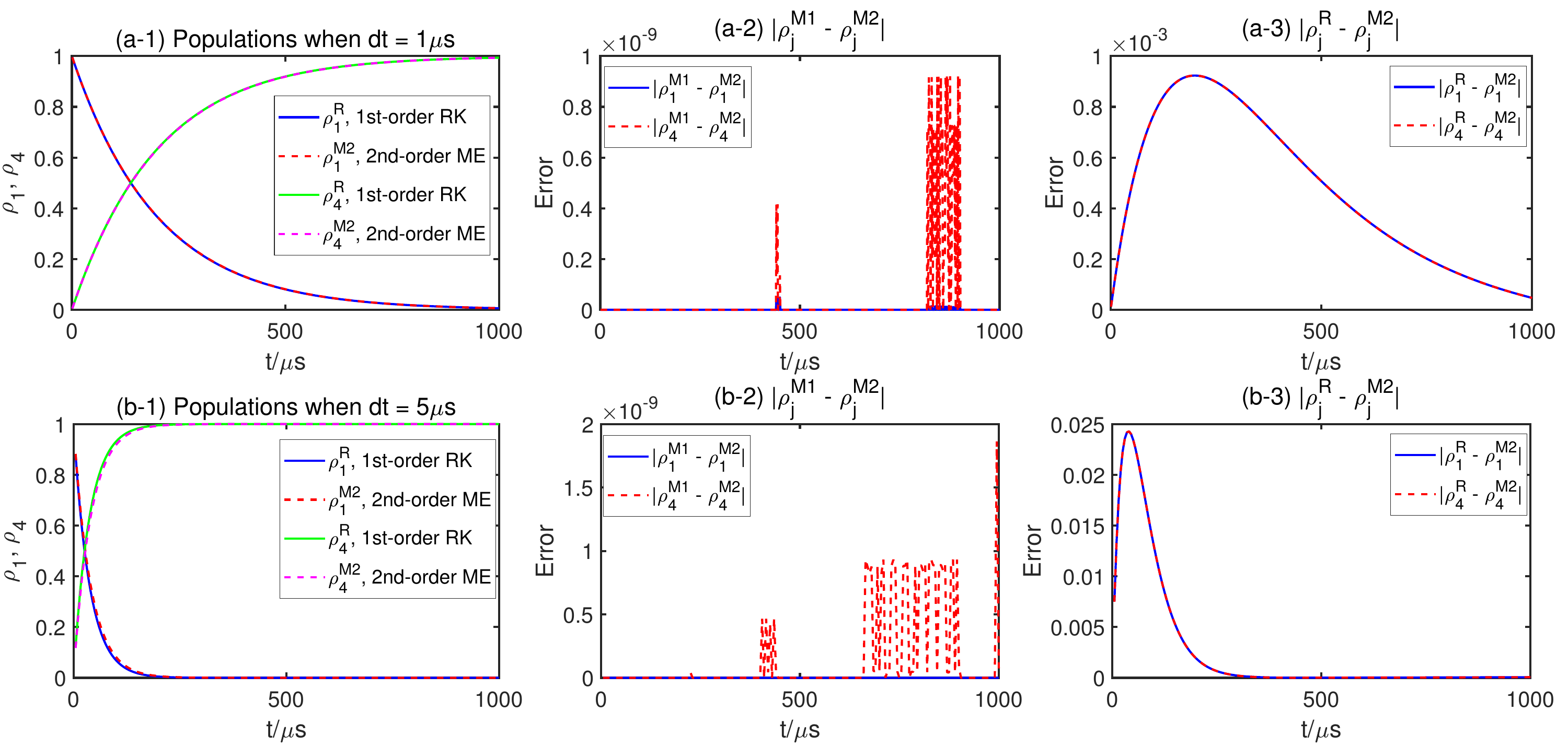}}
\caption{Comparisons among the first-order Runge-Kutta method, the first-order Magnus expansion and the second-order Magnus expansion on $\rho_{1}$ and $\rho_{4}$.}
	\label{fig:secondorder}
\end{figure}

Both simulations in Fig.~\ref{fig:firstorder} and Fig.~\ref{fig:secondorder} indicate that a larger $dt$ can induce faster convergence of populations in the spontaneous emission process. However, different from Fig.~\ref{fig:firstorder}, Fig.~\ref{fig:secondorder}(a-2) and Fig.~\ref{fig:secondorder}(b-2) further illustrate that the operator $L$ in Eq.~(\ref{con:Ltimevarying}) can generate higher-order Lie algebras in the second-order Magnus expansion. This can result in smaller truncation errors and higher accuracy compared to the first-order Runge-Kutta method, and the numerical results are further compared in Fig.~\ref{fig:secondorder}(a-3) and Fig.~\ref{fig:secondorder}(b-3).

\subsection{Error divergence with $h$} In this subsection, we compare the divergence of truncation errors based on the non-Markovian dynamics with the operator in Eq.~(\ref{con:Ltimevarying}). Taking the simulations with $dt = 10\mu$s, all the other parameters are the same as those in Fig.~\ref{fig:secondorder}. We further compare how the truncation error of the first-order Magnus expansion diverges in $\left[t_0,t_0+h \right]$ with $t_0 = 0$.  
\begin{figure}[h]
\centerline{\includegraphics[width=0.8\columnwidth]{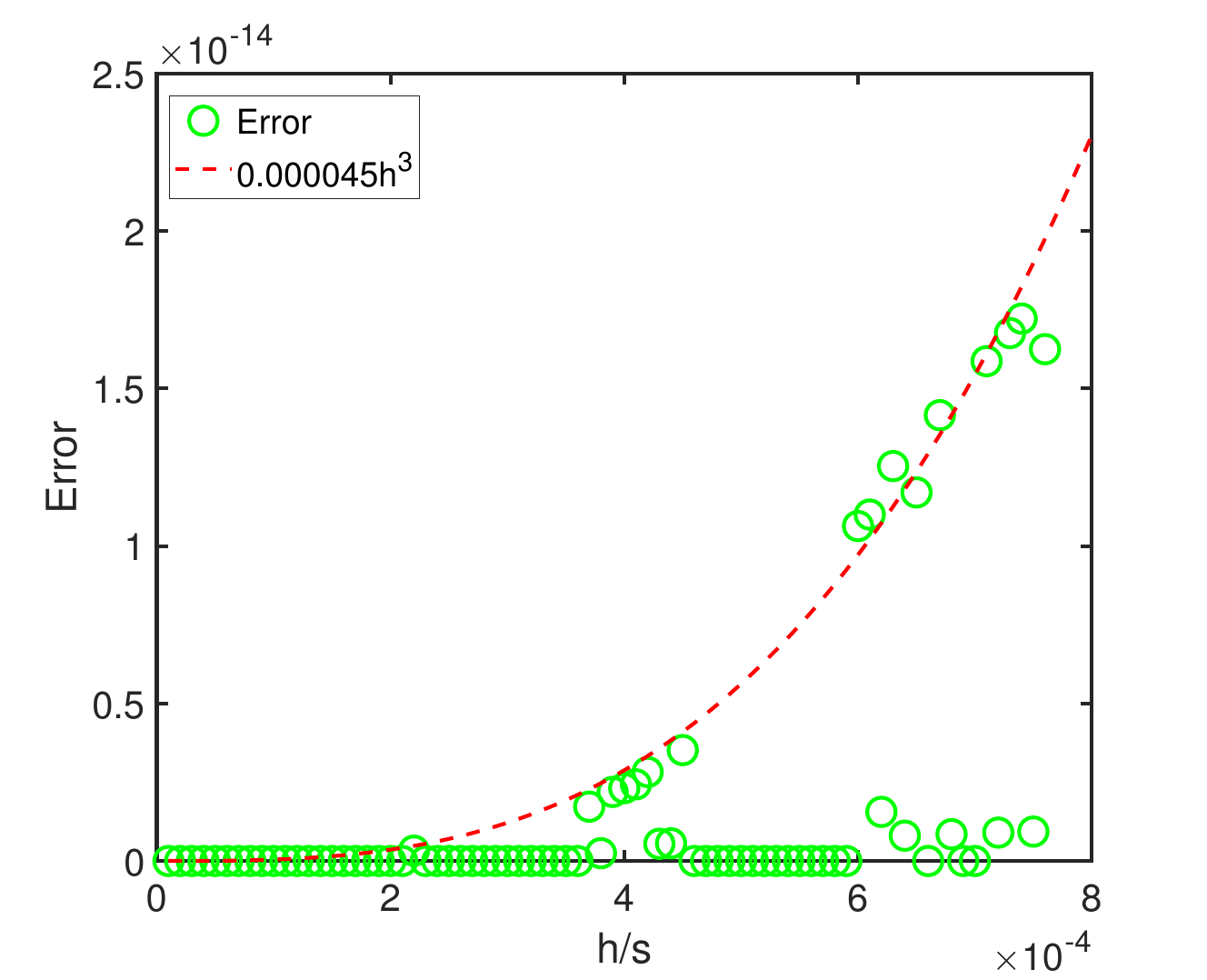}}
\caption{The truncation error of the first-order Magnus expansion influenced by $h$.}
	\label{fig:Compareerror}
\end{figure}

In Fig.~\ref{fig:Compareerror}, we take the dynamics of $\rho_1$ in Eq.~(\ref{con:twoleveldensitym}) as an example. For simplification, we regard the numerical simulation with the second-order Magnus expansion as precise solutions, then compare the divergence of numerical results with the first-order  Magnus expansion. As simulated by the green circles, the upper bound of truncation errors increases with $h$. By fitting with the function represented with the red line in Fig.~\ref{fig:Compareerror}, which is proportional to $h^3$, we can find that the numerical results agree with Eq.~(\ref{con:Omega2Middle}) and the simplified case in Theorem~\ref{FirstOrderError}.

\subsection{Two-atom filtering} According to the quantum filtering equation and numerical analysis in Sec.~\ref{Sec:Stochastic}, on the one hand, the filtering dynamics is influenced by the measurement time $t_0$ and the length of evolution time $h$.  Considering that the truncation error of the Magnus expansion in $\left[ t_0,t_0+h\right]$ has been given in  Sec.~\ref{Sec:Stochastic}, we clarify the numerical simulations in this section by taking $t_0 = 0$. 

On the other hand, the analysis on the convergence of the stochastic Magnus expansion requires that the measurement information can be acquired. However, in $\left[ t_0,t_0+h\right]$, only the measurement information at $t_0$ can be acquired according to Sec.~\ref{Sec:Stochastic} and physical realizations. In the following, to clarify the performance of numerical analysis, we first consider an ideal circumstance that the measurement results of quantum states $\rho(t)$ are available for arbitrary $t\in \left[t_0,t_0+h \right]$. Based on this, we clarify the performance of the Runge-Kutta and Magnus expansion methods in Eq.~(\ref{con:Delta12sRewrite}) only using the measurement information at $t_0$.

In the numerical simulations, we adopt the initial state as a two-body entangled state, and the density matrix can be represented as
\begin{equation}
\begin{aligned}
\rho\left(t_0\right) = 
 \begin{pmatrix} 
\rho_{\rm ee}\left(t_0 \right)  & 0 & 0 & \frac{1}{2} \\
 0 & \rho_{\rm eg}\left(t_0 \right) & 0 &0 \\
 0 & 0 & \rho_{\rm ge}\left(t_0 \right) &0\\
 \frac{1}{2} & 0 & 0 &\rho_{\rm gg}\left(t_0 \right)
 \end{pmatrix},
\end{aligned}
\end{equation}
with $\rho_{\rm ee}\left(t_0 \right) = \rho_{\rm gg}\left(t_0 \right) = 1/2$ and $\rho_{\rm eg}\left(t_0 \right) = \rho_{\rm ge}\left(t_0 \right) = 0$. According to the modeling in Eq.~(\ref{con:MasterEq}) and Eq.~(\ref{con:SME}), we denote $H =\left(\omega_0/2\right) \left(\sigma_1^z + \sigma_2^z\right) $,  $L = \sigma_1^- +  \sigma_2^- $, 
\begin{equation}  
\begin{aligned}
\sigma_1^z  =  \sigma_z  \otimes \begin{pmatrix} 1 & 0 \\ 0 & 1 \end{pmatrix},
~~~\sigma_2^z  = \begin{pmatrix} 1 & 0 \\ 0 & 1 \end{pmatrix} \otimes \sigma_z,
\end{aligned}
\end{equation}
\begin{equation}  
\begin{aligned}
\sigma_1^-  =  \sigma_-  \otimes \begin{pmatrix} 1 & 0 \\ 0 & 1 \end{pmatrix},
~~~\sigma_2^-  = \begin{pmatrix} 1 & 0 \\ 0 & 1 \end{pmatrix} \otimes \sigma_-,
\end{aligned}
\end{equation}
and $O = \int_0^t\alpha(t,s)ds L$. The quantum measurement is applied to the fist atom and the measurement operator reads $M = \sigma_1^z$.

\begin{figure}[h]
\centerline{\includegraphics[width=1\columnwidth]{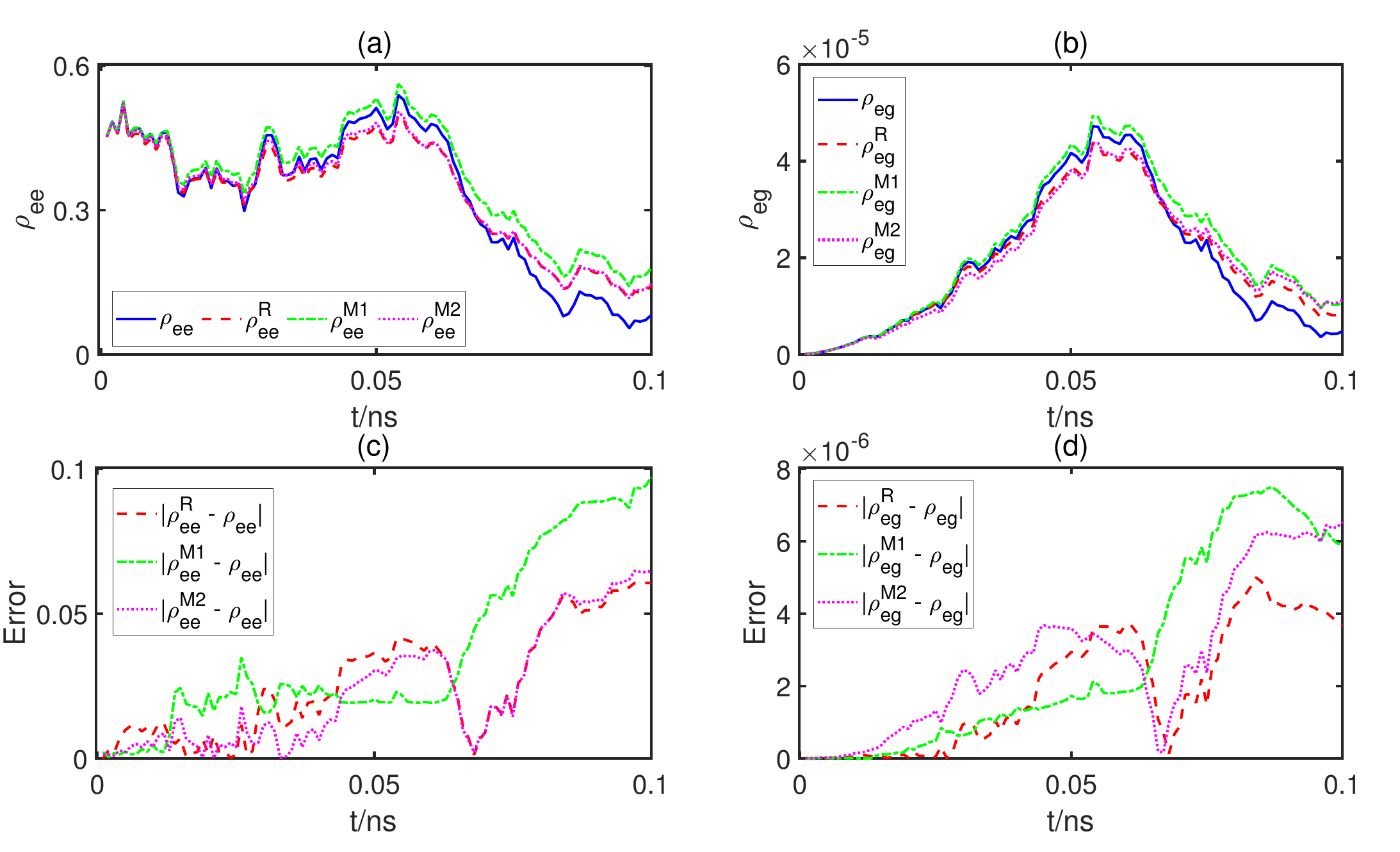}}
\caption{Numerical simulations on the quantum stochastic dynamics using the first-order Runge-Kutta method ($\rho_{\rm ee}^{\rm R}$, $\rho_{\rm eg}^{\rm R}$), the first-order Magnus expansion ($\rho_{\rm ee}^{\rm M1}$, $\rho_{\rm eg}^{\rm M1}$) and the second-order Magnus expansion ($\rho_{\rm ee}^{\rm M2}$, $\rho_{\rm eg}^{\rm M2}$).}
	\label{fig:Stochastic}
\end{figure}

In the numerical simulations in Fig.~\ref{fig:Stochastic}, we take the parameters as $\omega_0 = 50$GHz, $\Omega = 45$GHz, $\gamma  = 0.1$GHz, $dt = 0.001$ns and $h = 0.1$ns. The solid lines in Figs.~\ref{fig:Stochastic}(a) and (b) represent the ideal evolution trajectories simulated with the first-order Runge-Kutta method when the measurement results of quantum states $\rho(t)$ are available for arbitrary $t\in \left[t_0,t_0+h \right]$. The other lines in Figs.~\ref{fig:Stochastic}(a) and (b) represent the practical circumstance in Eq.~(\ref{con:SMEVect0}) in which only the measurement at $t_0$ is acquired. The dashed lines are simulated by the first-order Runge-Kutta method, and the other two lines are simulated by the first- and second-order Magnus expansions, respectively. The errors for Figs.~\ref{fig:Stochastic}(a) and (b) are further compared in Figs.~\ref{fig:Stochastic}(c) and (d), respectively.

The mathematical modeling of the ideal case with solid lines in Figs.~\ref{fig:Stochastic}(a) and (b) are different from the following RK and ME methods, because the real-time measurements rather than the initial measurement are adopted in the filtering. Here we take $h = 100dt$, and the errors between the ideal case and three numerical methods in  Figs.~\ref{fig:Stochastic}(c) and (d) indicate that the errors eventually tend to diverge as the simulation time increases. On the other hand, when $t<h/2$, the results are in agreement with each other and the errors are small in Figs.~\ref{fig:Stochastic}(c) and (d), clarifying that numerical methods with only the initial measurement at $t_0$ can simulate quantum stochastic dynamics in a short time scale.

\section{Conclusion} \label{Sec:Conclusion}
In this paper, we studied the non-Markovian interactions between the quantum system and environment from the perspective of Magnus expansion and Lie algebras. By modeling in the Liouville space, the original quantum master equation can be modeled as a high-dimensional time-varying equation with the time-varying components arising from the non-Markovian integrals. For the dynamics without measurement, the truncation errors in Magnus expansions are determined by commutators among the time-invariant Hamiltonian and time-varying components in the Liouville space arising from the non-Markovian environment. For stochastic dynamics with measurement, the first- and second-order Magnus expansions can be different according to whether the measurement noise is modeled in the {\rm{It\^{o}}}  or Stratonovich format. The numerical simulations further clarify the efficiency of the Magnus expansion in deterministic quantum evolutions and quantum stochastic dynamics with small step sizes.

\appendix
\section{Derivation on the non-Markovian model in the interaction picture} \label{Sec:IntractionPicture}
In this section, we introduce the derivation of the non-Markovian master equation in the interaction picture. 

The dynamics of the density matrix in the interaction picture can be derived as
\begin{equation} \label{con:dRhoI}
\begin{aligned}
\frac{d\rho_{\rm I}}{dt} &= \frac{dU_0^{\dag} }{dt}\rho U_0 + U_0^{\dag}\frac{d\rho}{dt}U_0 + U_0^{\dag}\rho\frac{dU_0}{dt}\\
&=iH_0U_0^{\dag}\rho U_0 + U_0^{\dag}\frac{d\rho}{dt}U_0 -iU_0^{\dag}\rho H_0U_0\\
& = i\left[H_0, U_0^{\dag}\rho U_0\right]+ U_0^{\dag}\frac{d\rho}{dt}U_0.
\end{aligned}
\end{equation}
On the other hand, we consider the unitary rotation upon Eq.~(\ref{con:MasterEq}), namely
\begin{equation}
\begin{aligned}
U_0^{\dag} \left[H,\rho\right]U_0 &= U_0^{\dag} \left(H\rho-\rho H \right) U_0\\
& = U_0^{\dag} H U_0   U_0^{\dag} \rho  U_0  - U_0^{\dag} \rho U_0 U_0^{\dag} H U_0 \\
& = \left[ H_0 + H_{\rm CI},\rho_{\rm I} \right],
\end{aligned}
\end{equation}
\begin{equation}
\begin{aligned}
U_0^{\dag} \left[L,\rho O^{\dag} \right] U_0 &= U_0^{\dag} \left(L\rho O^{\dag}-\rho O^{\dag} L \right) U_0\\
& = U_0^{\dag} L U_0 U_0^{\dag} \rho U_0 U_0^{\dag} O^{\dag}  U_0 - U_0^{\dag} \rho  U_0 U_0^{\dag} O^{\dag}  U_0 U_0^{\dag} L U_0\\
& = L_{\rm I} \rho_{\rm I}  O_{\rm I}^{\dag} -   \rho_{\rm I}  O_{\rm I}^{\dag}   L_{\rm I},
\end{aligned}
\end{equation}
and
\begin{equation}
\begin{aligned}
U_0^{\dag} \left[O\rho, L^{\dag}\right]  U_0 &= U_0^{\dag} \left(O\rho L^{\dag}-L^{\dag} O\rho \right) U_0\\
& = U_0^{\dag} O U_0 U_0^{\dag} \rho  U_0 U_0^{\dag} L^{\dag} U_0 - U_0^{\dag} L^{\dag} U_0 U_0^{\dag} O U_0 U_0^{\dag} \rho U_0\\
& = O_{\rm I} \rho_{\rm I} L_{\rm I}^{\dag} - L_{\rm I}^{\dag} O_{\rm I} \rho_{\rm I}.
\end{aligned}
\end{equation}
Then combined with Eq.~(\ref{con:dRhoI}), we have
\begin{equation} \label{con:dRhoIV2}
\begin{aligned}
\frac{d\rho_{\rm I}}{dt}& = i\left[H_0, U_0^{\dag}\rho U_0\right] -i  \left[ H_0 + H_{\rm CI} ,\rho_{\rm I} \right] + \left[ L_{\rm I}, \rho_{\rm I}  O_{\rm I}^{\dag}  \right] + \left[ O_{\rm I} \rho_{\rm I}, L_{\rm I}^{\dag}\right]\\
&  =-i \left[ H_{\rm CI} ,\rho_{\rm I} \right]+ \left[ L_{\rm I}, \rho_{\rm I}  O_{\rm I}^{\dag}  \right] + \left[ O_{\rm I} \rho_{\rm I}, L_{\rm I}^{\dag}\right].
\end{aligned}
\end{equation}

\bibliographystyle{siamplain}
\bibliography{bib}
\end{document}